\newtheorem{theorem}{Theorem}[section]
\newtheorem{lemma}[theorem]{Lemma}
\newtheorem{definition}[theorem]{Definition}
\newtheorem{prop}[theorem]{Proposition}
\newtheorem{cor}[theorem]{Corollary}
\newtheorem{question}{Question}
\DeclareMathOperator{\KL}{KL}
\begin{document}

\title{Cryptographic Compression}

\author{Joshua Cooper and Grant Fickes \\ Atombeam Technologies}
\date{\today}

\maketitle

\begin{abstract}
We introduce a protocol called ENCORE which simultaneously compresses and encrypts data in a one-pass process that can be implemented efficiently and possesses a number of desirable features as a streaming encoder/decoder.  Motivated by the observation that both lossless compression and encryption consist of performing an invertible transformation whose output is close to a uniform distribution over bit streams, we show that these can be done simultaneously, at least for ``typical'' data with a stable distribution, i.e., approximated reasonably well by the output of a Markov model.  The strategy is to transform the data into a dyadic distribution whose Huffman encoding is close to uniform, and then store the transformations made to said data in a compressed secondary stream interwoven into the first with a user-defined encryption protocol.  The result is an encoding which we show exhibits a modified version of Yao's ``next-bit test'' while requiring many fewer bits of entropy than standard encryption.  Numerous open questions remain, particularly regarding results that we suspect can be strengthened considerably.
\end{abstract}

\section{Introduction}

Lossless compression and encryption share a goal: transforming strings reversibly and efficiently into an approximately uniformly random string.  Encryption seeks this end because there is no information that can be extracted from a uniform sequence, so the user's plaintext remains private; compression seeks this end because only sequences whose distribution is uniform cannot be further compressed, so the user's data is maximally shortened.  Naturally, the measures of success are different between these two perspectives.  Nonetheless, here we show that, at least for typical-looking streaming data -- that sampled from a Markov chain -- it is possible to achieve both types of benefit simultaneously, with a protocol we term ENCORE: ENcryption and COmpression REconciled.  The use of Markov models as a proxy for ``typicality'' is a standard going back at least to Shannon's foundational 1948 paper \cite{Shannon1948}.

We describe the ENCORE protocol in the next section.  Broadly, the strategy is comprised of two components.  First, the underlying distribution of data is transformed into a distribution that is {\em dyadic}, i.e., wherein each probability is a negative power of $2$.  As we show below in Section \ref{Sec:AlgorithmSecurity}, data which is sampled iid from a dyadic distribution gives rise to a Huffman encoding which is a pure uniform bit stream, i.e., it is indistinguishable from a sequence of iid fair coin tosses.  (It has been previously observed that Huffman coding provides a semblance of encryption; see, for example, \cite{GMR96}.)  We then show that data generated by a Markov model and then transformed in a particular manner has this property asymptotically, i.e., following a burn-in period whose length is governed by the mixing time of the model, bits of the Huffman encoded data are exponentially close to a fair coin toss -- even when conditioned on the entire past of the stream.

The second component of the protocol provides the ``reconstruction'' information to invert the transformations in order to recover the original data.  We want to ensure that few bits are used to perform the transformations -- as true random samples are notoriously costly.  We also aim to minimize the damage done by additional reconstruction data to the abridgement provided by Huffman coding.  Therefore, transformations are chosen to occur infrequently -- in practice, a greedy approach works well -- and the resulting sequence of modifications is compressed in a manner designed to take advantage of this sparseness.  This compressed transformation data can then be encrypted with any cryptographic scheme.  One might use two channels, if available, to send the encoded-transformed stream and the transformation-reconstruction stream.  However, for the purpose of self-containment of the protocol, we introduce in Section \ref{Section:Interleaving} a way to interleave the two streams into one which we prove avoids exposing information about the original data.  Finally, in Section \ref{Section:Questions}, we present a number of open questions about the protocol -- in particular, we conjecture that a number of the properties we prove it exhibits can be strengthened.

\section{The ENCORE Algorithm}\label{Sec:AlgorithmDescription}

Here we present the ENCORE (ENcrpytion and COmpression REconciled) algorithm whose cryptographic properties are formally considered in the sequel. We denote by $\Omega$ the set of states (typically, words of a fixed length over a finite alphabet generated by an ergodic Markov model) and write $\sigma$ for the stationary distribution of these states. The distribution $\sigma$ is used to generate a Huffman encoding, $C$, of $\Omega$. The encoding given by $C$ defines another ``implied'' distribution on $\Omega$, called $\pi$, where for each $\omega \in \Omega$, $\pi(\omega) := 2^{-|C(\omega)|}$, where $C(\omega)$ denotes the codeword assigned to $\omega$ by $C$ and $|\omega'|$ is the length of the word $\omega'$ in bits. 

States of $\Omega$ can be partitioned by the relation between $\sigma(\omega)$ and $\pi(\omega)$ as follows. Let $\mathcal{O} := \{\omega \in\Omega : \sigma(\omega) \geq \pi(\omega)\}$ be the collection of overrepresented (including perfectly represented) states, and let $\mathcal{U} := \{\omega \in\Omega : \sigma(\omega) < \pi(\omega)\}$ be the underrepresented states. From here, we define a nonnegative matrix $B = \{b_{\omega \omega'}\}_{\omega, \omega' \in \Omega}$ of $\Omega \mapsto \Omega$ ``transformation probabilities'' with the properties that:
\begin{enumerate}
    \item $B$ is row-stochastic: For all $\omega \in \Omega$, $\sum_{\omega' \in \Omega} b_{\omega \omega'} = 1$.
    \item If $\Omega$ is sampled according to $\sigma$ and then the elements are mapped to $\Omega$ according to the transformation probabilities $B$, the result is $\pi$: For all $\omega \in \Omega$, $\sum_{\omega'\in\Omega} \sigma(\omega') b_{\omega \omega'} = \pi(\omega)$.
    \item Underrepresented states only transform to themselves: for $\omega \in \mathcal{U}$, $b_{\omega \omega} = 1$.
    \item Overrepresented states only transform to (themselves and) underrepresented states: for $\omega \in \mathcal{O}$, $b_{\omega \omega'} > 0$ implies $\omega' \in \mathcal{U} \cup \{\omega\}$.
\end{enumerate} 
To see that it is always possible to define such a $B$, note that the following satisfies the above properties (where the third case can be ignored if $\mathcal{U} = \emptyset$, i.e., $\sigma$ is already dyadic and $\pi \equiv \sigma$): 
$$
b_{\omega \omega'} = 
\begin{cases}
    1 & \text{if $\omega = \omega' \in \mathcal{U}$} \\
    \pi(\omega)/\sigma(\omega) & \text{if $\omega = \omega' \in \mathcal{O}$} \\
    \displaystyle \left (1- \frac{\pi(\omega)}{\sigma(\omega)} \right ) \frac{\pi(\omega')-\sigma(\omega')}{\sum_{\alpha \in U} \pi(\alpha)-\sigma(\alpha)} & \text{if $\omega \in \mathcal{O}$ and $\omega' \in \mathcal{U}$} \\
    0 & \text{otherwise.}
\end{cases}
$$
In practice, it is preferable for row $\omega$ of $B$ to represent a probability distribution with small support whenever possible, so often we allocate the probabilities $\pi(\omega')-\sigma(\omega')$, $\omega' \in \mathcal{U}$, greedily to states $\omega \in \mathcal{O}$ to saturate the probabilities $1-\pi(\omega)/\sigma(\omega)$ of nontrivial transformation instead of proportionally across all elements of $\mathcal{U}$.  This choice does not impact the analysis below, but may offer some advantages in computational efficiency and minimal use of additional entropy.  (See Section \ref{Section:Questions} for more explanation.) As we show below (see Proposition \ref{Prop:MarkovChainBitProb}), Huffman encoding of $\pi$ -- which we refer to as the ``encoded-transformed data'' stream -- approaches a uniform distribution over bit streams exponentially quickly, and (see Section \ref{Section:Interleaving}) the data required to reconstruct $\omega$ from $\omega'$ is very limited.  This data can therefore be compressed effectively (in a manner that exploits its sparsity) and then encrypted by any mechanism.  The result of said compression and encryption we will refer to as the ``transformation-reconstruction data'' stream.  Thus, by carefully interleaving the two resulting types of information, we can attain one stream which is simultaneously shorter on average than the original source stream sampled from $\sigma$ and passes a modified next-bit test (q.v.~\cite{Yao82}) that says nearby bits cannot be predicted with substantial accuracy, even given all previous data.  Overall, the process described herein uses significantly less computational power and bits of entropy (to feed the encryption of transformation data) than is required for the original stream -- and the result can be locally decoded, i.e., decompression and decryption require only a short window of data surrounding the desired information in the encoded stream.

\section{Security of ENCORE Algorithm}\label{Sec:AlgorithmSecurity}

Throughout the rest of the manuscript we let $\Omega$ be a finite set with $|\Omega| \geq 2$. Define $MC$ to be an ergodic Markov Chain on state space $\Omega$. Let $\tau$ be the mixing time of $MC$ and $\sigma$ its stationary distribution. Let $C$ be the Huffman encoding of $\sigma$, with $\pi$ the dyadic distribution implied by $C$, i.e., for each $\omega \in \Omega$, $\Pr_\pi[\omega] = 2^{-|C(\omega)|}$ where recall that $|\omega'|$ is the length of the word $\omega'$. Define $T(C)$ to be the Huffman tree corresponding to $C$, and let $d(v)$ and $h(v)$ denote the depth (distance from the root) and height (distance between $v$ and the deepest ``descendant'') of vertex $v$ in the rooted tree $T(C)$, respectively. Note that the Huffman tree $T(C)$ is used to generate the encoding $C$, namely, state $\omega \in \Omega$ is realized as a leaf in $T(C)$, and the sequence of left-child traversals (yielding the bit ``0'') and right-child traversals (yielding the bit ``1'') on the unique path from the root of $T(C)$ to the leaf corresponding to $\omega$ is its encoding. For vertex $v \in T(C)$, we let $\mathcal{E}_v \subseteq \Omega$ denote the event that $v$ is traversed in the encoding of state $\omega \in \Omega$. 

Let $M$ and $m$ be the lengths of the maximum and minimum length codeword given by $C$. Define $\mathcal{O}$ and $\mathcal{U}$ to be the overrepresented and underrepresented states in $\Omega$, i.e., $\mathcal{O} = \{\omega \in \Omega : \sigma(\omega) \geq \pi(\omega)\}$ and $\mathcal{U} = \{\omega \in \Omega : \sigma(\omega) < \pi(\omega)\}$. Note that perfectly-represented states, i.e., states $\omega \in \Omega$ satisfying $\sigma(\omega) = \pi(\omega)$, are contained in $\mathcal{O}$.  (This choice is an inconsequential convenience.)  If $\mathbf{x} = \{x_i\}_{i=1}^\infty$ is a sequence of states from $\Omega$ for all positive integer times $i$, we use $C(\mathbf{x})$ to denote the infinite word resulting from the concatenation of $C(x_i)$ for all $i \geq 1$. Furthermore, we let $B$ be the square matrix of dimension $|\Omega| \times |\Omega|$ with rows and columns indexed by $\Omega$ where entry $B_{i,j}$ contains the probability that state $i$ is transformed to state $j$ by the ENCORE algorithm. Thus, $B$ is row-stochastic, i.e., the sum of entries in row $i$ of $B$ is one for any $i \in \Omega$, and the entries of $B$ are all nonnegative. Lastly, let $C(x)_i$ denote the $i$-th bit of $C(x)$ for any $1\leq i\leq |C(x)|$. 

Note that while ``dyadic rational'' sometimes refers to any fraction whose denominator is an integral power of two, when we refer to a distribution $\sigma$ on state space $\Omega$ as ``dyadic'', we mean that $\sigma(\omega) = 1 / 2^k$ for some $k = k(\omega) \in \mathbb{N}$ for every $\omega \in \Omega$. Crucially, $\pi$ is always dyadic. Moreover, if $\sigma$ is dyadic, then $\sigma = \pi$. This observation is simple enough, so we omit its proof.  

\begin{lemma}\label{Lem:VertexDepthProb}
For every $v \in V(T(C))$, $\Pr_\pi[\mathcal{E}_v] = 2^{-d(v)}$. 
\end{lemma}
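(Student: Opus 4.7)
The plan is to reduce the statement to a standard fact about full binary trees (Kraft's equality), which can be verified cleanly by induction on the height $h(v)$.

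First I would unpack the event $\mathcal{E}_v$. By the description of the Huffman encoding, the bits of $C(\omega)$ correspond exactly to the left/right choices on the root-to-leaf path for the leaf associated with $\omega$. Thus $\mathcal{E}_v$ happens precisely when the sampled state $\omega$ is a leaf-descendant of $v$ in $T(C)$. Writing $L(v)$ for the set of leaf-descendants of $v$, this gives
\begin{equation*}
\Pr_\pi[\mathcal{E}_v] \;=\; \sum_{\omega \in L(v)} \pi(\omega) \;=\; \sum_{\omega \in L(v)} 2^{-|C(\omega)|} \;=\; \sum_{\omega \in L(v)} 2^{-d(\omega)},
\end{equation*}
since the depth of a leaf equals the length of its codeword.

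The remaining claim is then $\sum_{\omega \in L(v)} 2^{-d(\omega)} = 2^{-d(v)}$. I would prove this by induction on $h(v)$, using the fact that a Huffman tree is a full binary tree, i.e., every internal node has exactly two children (otherwise a node with a single child could be contracted to strictly shorten the encoding, contradicting optimality of $C$). The base case $h(v)=0$ is the identity $v$ is a leaf, in which case $L(v)=\{v\}$ and $d(v)=|C(v)|$, giving the desired equality immediately. For the inductive step, if $v$ has children $u_L$ and $u_R$ then $L(v)$ is the disjoint union $L(u_L)\sqcup L(u_R)$, each child has depth $d(v)+1$ and strictly smaller height, so by induction
\begin{equation*}
\sum_{\omega\in L(v)} 2^{-d(\omega)} \;=\; 2^{-d(u_L)} + 2^{-d(u_R)} \;=\; 2\cdot 2^{-(d(v)+1)} \;=\; 2^{-d(v)}.
\end{equation*}

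There is no real obstacle here; the only substantive point is the observation that Huffman trees are full binary trees, which prevents the inductive step from collapsing when $v$ has a unique child. Everything else is bookkeeping about the bijection between root-to-leaf paths and codewords, together with the definition $\pi(\omega)=2^{-|C(\omega)|}$.
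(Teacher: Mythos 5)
Your proof is correct, and its first half is exactly the paper's: both reduce $\Pr_\pi[\mathcal{E}_v]$ to $\sum_{\ell \in L(v)} \pi(\ell) = \sum_{\ell \in L(v)} 2^{-d(\ell)}$ using the correspondence between codewords and root-to-leaf paths. Where you diverge is in how the remaining identity $\sum_{\ell \in L(v)} 2^{-d(\ell)} = 2^{-d(v)}$ is established. You prove it as a Kraft-type equality for the subtree rooted at $v$, by induction on $h(v)$, with the fullness of the Huffman tree (every internal node has two children) doing the real work. The paper instead rewrites the sum as $2^{-d(v)} \sum_{\ell} \Pr_\pi[\mathcal{E}_\ell \mid \mathcal{E}_v]$ and declares the conditional probabilities to sum to $1$; but the identification $\Pr_\pi[\mathcal{E}_\ell \mid \mathcal{E}_v] = 2^{d(v)-d(\ell)}$ is precisely the content of the lemma (it amounts to $\Pr_\pi[\mathcal{E}_v] = 2^{-d(v)}$), so as written that step is circular, or at least leaves the essential point unargued. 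Your induction supplies exactly the missing ingredient, and your explicit remark that fullness is what prevents the inductive step from collapsing is the right thing to flag --- a node with one child would break the doubling $2 \cdot 2^{-(d(v)+1)} = 2^{-d(v)}$. So your route is slightly longer but more self-contained and, in this instance, more rigorous than the paper's.
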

\begin{proof}
Let $v \in V(T(C))$. Let $T'$ denote the subtree of $T(C)$ rooted at $v$ induced by ($v$ and) all descendants of $v$. If $L$ is the set of all leaves in $T'$, then 
$$
\Pr_\pi[\mathcal{E}_v] = \sum_{\ell \in L} \pi(\ell). 
$$
For every leaf $\ell \in V(T(C))$, $\Pr_\pi[\mathcal{E}_\ell] = \pi(\ell) = 2^{-d(\ell)}$. Thus, 
$$
\Pr_\pi[\mathcal{E}_v] = \sum_{\ell \in L} 2^{-d(\ell)} = 2^{-d(v)} \sum_{\ell \in L} 2^{d(v) -d(\ell)} = 2^{-d(v)} \sum_{\ell \in L} \Pr_\pi[\mathcal{E}_\ell | \mathcal{E}_v] =  2^{-d(v)}.
$$ 
\end{proof}

\begin{cor}\label{Cor:EqualBranchProbs}
Let $v$ be a non-leaf vertex in $T(C)$. If $v_0$ and $v_1$ are the left and right children of $v$, then $\Pr_\pi[\mathcal{E}_{v_0}] = \Pr_\pi[\mathcal{E}_{v_1}] = \Pr_\pi[\mathcal{E}_v] / 2$. 
\end{cor}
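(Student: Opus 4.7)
The plan is to apply Lemma \ref{Lem:VertexDepthProb} directly, since that lemma pins down $\Pr_\pi[\mathcal{E}_u]$ for any vertex $u \in V(T(C))$ purely in terms of its depth $d(u)$. The corollary is essentially a one-line consequence: the two children of a non-leaf vertex sit one level deeper than their parent in $T(C)$, so their associated probabilities are each a factor of $2$ smaller.

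Concretely, I would first observe that because $v_0$ and $v_1$ are the left and right children of $v$ in the rooted tree $T(C)$, we have $d(v_0) = d(v_1) = d(v) + 1$. Applying Lemma \ref{Lem:VertexDepthProb} three times (to $v$, $v_0$, and $v_1$) then gives $\Pr_\pi[\mathcal{E}_{v_0}] = 2^{-d(v_0)} = 2^{-(d(v)+1)} = \tfrac{1}{2}\Pr_\pi[\mathcal{E}_v]$ and likewise for $v_1$. That yields the claimed joint equality in a single step.

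There is no real obstacle here; the only subtlety worth flagging is that the identity $d(v_i) = d(v) + 1$ is implicit in the definition of depth as distance from the root, and the fact that we are dealing with a binary tree (each non-leaf has exactly a left and a right child) is built into the Huffman setup introduced earlier. One could, if desired, also give an alternative derivation by writing $\mathcal{E}_v$ as the disjoint union $\mathcal{E}_{v_0} \sqcup \mathcal{E}_{v_1}$ (since traversing $v$ in the encoding of a leaf below $v$ forces traversal of exactly one child) and then combining Lemma \ref{Lem:VertexDepthProb} with the additivity of $\Pr_\pi$, but the direct depth-based calculation is cleaner.
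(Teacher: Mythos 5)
Your proof is correct and is precisely the intended argument: the paper states the corollary without proof as an immediate consequence of Lemma \ref{Lem:VertexDepthProb}, using exactly the observation $d(v_0)=d(v_1)=d(v)+1$ that you spell out. Nothing is missing.
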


\begin{theorem}\label{Thm:iidDyadicSourceSecure}
If $\mathbf{x} = \{x_k\}_{k=1}^\infty$ is a sequence of elements drawn i.i.d.~from $\pi$, then $\Pr_\pi[C(\mathbf{x})_j = 0 | C(\mathbf{x})_1, \ldots, C(\mathbf{x})_{j-1}] = 1/2$ for all $j \geq 1$. 
\end{theorem}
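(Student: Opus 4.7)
The plan is to exploit the prefix-free property of Huffman codes to convert a statement about bits into a statement about a single random walk in $T(C)$. Given any finite bit prefix $C(\mathbf{x})_1, \ldots, C(\mathbf{x})_{j-1}$, unique decodability lets us parse it as a concatenation of complete codewords $C(x_1), \ldots, C(x_k)$ followed by a (possibly empty) strict prefix of $C(x_{k+1})$. That strict prefix corresponds to a unique internal vertex $v = v(j) \in T(C)$ obtained by descending from the root according to the trailing bits (with $v$ being the root itself if the prefix ends at a codeword boundary, which is fine because the root is internal since $|\Omega| \geq 2$). Thus the event of conditioning on this bit prefix coincides with the event $\{x_1 = \omega_1, \ldots, x_k = \omega_k\} \cap \mathcal{E}_v^{(k+1)}$, where $\mathcal{E}_v^{(k+1)}$ denotes the event that the encoding of $x_{k+1}$ passes through $v$.

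Next I would use the i.i.d.\ assumption: conditional on the values of $x_1, \ldots, x_k$, the symbol $x_{k+1}$ still has distribution $\pi$, so the distribution of the $j$-th bit depends only on $x_{k+1}$ conditioned on $\mathcal{E}_v^{(k+1)}$. The $j$-th bit equals $0$ precisely when the encoding of $x_{k+1}$ descends from $v$ to its left child $v_0$, i.e., when $\mathcal{E}_{v_0}^{(k+1)}$ occurs. Therefore
$$
\Pr_\pi[C(\mathbf{x})_j = 0 \mid C(\mathbf{x})_1, \ldots, C(\mathbf{x})_{j-1}] \;=\; \Pr_\pi[\mathcal{E}_{v_0} \mid \mathcal{E}_v] \;=\; \frac{\Pr_\pi[\mathcal{E}_{v_0}]}{\Pr_\pi[\mathcal{E}_v]} \;=\; \frac{1}{2},
$$
where the last equality is Corollary \ref{Cor:EqualBranchProbs} (and the boundary case $v = \text{root}$ reduces to the same computation applied at the root, using $\Pr_\pi[\mathcal{E}_\text{root}] = 1$).

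The only delicate point is the parsing step: I need to argue that the map from bit prefixes to pairs $(k, v)$ is well-defined, and that $v$ is always a non-leaf vertex. The first holds because Huffman codes are prefix-free, so no two distinct parsings into complete codewords plus a strict prefix can agree. The second holds because upon reaching a leaf the decoder immediately emits that symbol and returns to the root, so the ``current'' position after processing any number of bits is always an internal vertex. Once this bookkeeping is in place, the probabilistic content of the theorem is entirely contained in Corollary \ref{Cor:EqualBranchProbs}, and the i.i.d.\ hypothesis does the rest.
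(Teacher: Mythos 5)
Your proposal is correct and takes essentially the same approach as the paper: identify the conditioning bit-prefix with an internal vertex $v$ of $T(C)$ via the unique parse into complete codewords plus a residual prefix, use the i.i.d.\ hypothesis to drop the completed symbols, and apply Corollary \ref{Cor:EqualBranchProbs} to get $\Pr_\pi[\mathcal{E}_{v_0}\mid\mathcal{E}_v]=1/2$. Your boundary case ($v$ the root) is exactly the paper's Case 1, and your general case is its Case 2, so the arguments coincide.
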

\begin{proof}
Let $\mathbf{x} = \{x_k\}_{k=1}^\infty$ be an i.i.d.~sampling from the dyadic distribution $\pi$. 
Define $\{y_l\}_{l=0}^\infty$ so that $y_0 = 0$ and $y_{l} = y_{l-1} + |C(x_{l})|$ for each $l\geq 1$. Fix any $i > 0$, and let $k \in \mathbb{N}$ so that $k = \min\{l : y_l \geq i\}$. Thus, the bit $C(\mathbf{x})_i$ is generated by $C(x_k)$. We split into two cases depending on whether $C(\mathbf{x})_i$ is the first bit of $C(x_k)$. 

Case $1$: The bit $C(\mathbf{x})_i$ is the first bit of $C(x_k)$. Since the unique vertex $v$ of $T(C)$ at depth zero is not a leaf, by Corollary \ref{Cor:EqualBranchProbs}, if $v_0$ and $v_1$ are the left and right children of $v$, then $\Pr_\pi[C(\mathbf{x})_i = 0] = \Pr_\pi[\mathcal{E}_{v_0}] = 1/2$. Since the sample $\mathbf{x}$ is i.i.d., the value $C(\mathbf{x})_i$ is independent of $\{C(\mathbf{x})_j\}_{j=1}^{i}$ for each $1 \leq j < i$. 

Case $2$: The bit $C(\mathbf{x})_i$ is not the first bit of $C(x_k)$. By the definition of $k$, note that $C(\mathbf{x})_{y_{k-1}+1}$ is the first bit of $C(\mathbf{x})$ generated by $C(x_k)$. Since the sample $\mathbf{x}$ is i.i.d., $\Pr_\pi[C(\mathbf{x})_i | C(\mathbf{x})_1, \ldots, C(\mathbf{x})_{i-1}] = \Pr_\pi[C(\mathbf{x})_i | C(\mathbf{x})_{y_{k-1}+1}, \ldots, C(\mathbf{x})_{i-1}]$. Furthermore, if $v \in V(T(C))$ is the parent vertex of the edge which generates the bit $C(\mathbf{x})_i$ (i.e., the vertex of depth $i-y_{k-1}-1$ in $T(C)$ on the path from the root to $x_k$), then $\Pr_\pi[C(\mathbf{x})_i | C(\mathbf{x})_{y_{k-1}+1}, \ldots, C(\mathbf{x})_{i-1}] = \Pr_\pi[C(\mathbf{x})_i | \mathcal{E}_v]$, but by Corollary \ref{Cor:EqualBranchProbs}, $\Pr_\pi[C(\mathbf{x})_i | \mathcal{E}_v] = 1/2$, completing the proof. 
\end{proof}

For two distributions $D_1$ and $D_2$ over the same countable collection of symbols $\Omega$, the total variation between the two distributions, $\|D_1 - D_2\|$ is defined by 
$$
\|D_1 - D_2\| = \sum_{\omega \in \Omega} \frac12 |D_1(\omega) - D_2(\omega)| = \max_{A\subset \Omega} \left(\Pr_{D_1}[A] - \Pr_{D_2}[A]\right) .
$$
Further, for Markov chain $MC$ with stationary distribution $\sigma$, let $\Delta(t) := \max_\omega \|\sigma - P_\omega^t\|$, where $t$ is some positive integer and $P_\omega^t$ denotes the distribution $t$ steps after state $\omega$. Thus, the mixing time of Markov chain $MC$ is defined by 
$$
\tau = \min\left\{ t : \Delta(t) \leq \frac{1}{2e} \right\} .
$$
There is not consensus on the choice $1/2e$. Some places in the literature use $1/4$ instead. We adopt the value $1/2e$, since it yields the following result. 

\begin{prop}\label{Prop:MixTimeThreshold}
(pg 55 in \cite{LevinPeresWilmer2006})
For any finite ergodic Markov chain, $\Delta(t)$ is nonincreasing. Additionally, 
for positive time $n$, 
$$
\Delta(n) \leq \frac{1}{e^{n / \tau}}
$$
\end{prop}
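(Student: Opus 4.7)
The plan is to prove the two claims in sequence. For the monotonicity of $\Delta(t)$, I would invoke the standard contraction property of Markov kernels in total variation: for any two probability distributions $\mu, \nu$ on $\Omega$ and any stochastic matrix $P$, one has $\|\mu P - \nu P\| \leq \|\mu - \nu\|$. This can be verified either directly from the definition via convexity ($|\sum_\omega (\mu-\nu)(\omega) P(\omega, \omega')| \leq \sum_\omega |\mu - \nu|(\omega) P(\omega, \omega')$, then sum over $\omega'$) or via a coupling argument. Applying this with $\mu = P_\omega^t$, $\nu = \sigma$, and using the fixed-point property $\sigma P = \sigma$, yields $\|P_\omega^{t+1} - \sigma\| \leq \|P_\omega^t - \sigma\|$. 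Taking the max over starting states $\omega$ gives $\Delta(t+1) \leq \Delta(t)$.

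For the exponential bound, I would introduce the standard auxiliary quantity $\bar d(t) := \max_{x,y \in \Omega} \|P_x^t - P_y^t\|$. A triangle inequality through $\sigma$ shows $\Delta(t) \leq \bar d(t) \leq 2\Delta(t)$: the upper bound follows from $\|P_x^t - P_y^t\| \leq \|P_x^t - \sigma\| + \|\sigma - P_y^t\|$, and the lower bound from choosing $y$ distributed as $\sigma$ and averaging. The crucial step is the submultiplicativity $\bar d(s + t) \leq \bar d(s)\,\bar d(t)$, which I would establish by a two-stage coupling: maximally couple the chains from $x$ and $y$ for the first $s$ steps so that $\Pr[X_s \neq Y_s] = \|P_x^s - P_y^s\| \leq \bar d(s)$, and conditional on disagreement at time $s$, run the next $t$ steps and apply the bound $\bar d(t)$ on the residual total variation.

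Combining the pieces at integer multiples of the mixing time gives
\[
\Delta(k\tau) \leq \bar d(k\tau) \leq \bar d(\tau)^k \leq (2\Delta(\tau))^k \leq \left(\tfrac{1}{e}\right)^k = e^{-k},
\]
where the second-to-last inequality uses the definition $\Delta(\tau) \leq 1/(2e)$. For an arbitrary positive integer $n$, write $n = k\tau + r$ with $k = \lfloor n/\tau \rfloor$ and $0 \leq r < \tau$; then monotonicity gives $\Delta(n) \leq \Delta(k\tau) \leq e^{-k}$, which in the form stated in the proposition can be written $\Delta(n) \leq e^{-n/\tau}$ up to the rounding in the exponent (this is the cosmetic reason for the choice of $1/(2e)$ over $1/4$ in the definition of $\tau$).

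The main obstacle is the submultiplicativity bound $\bar d(s+t) \leq \bar d(s)\bar d(t)$; the contraction and triangle-inequality steps are routine, but the coupling construction requires care to make sure the second-stage bound can be applied conditionally on the first-stage failure event. Everything else is bookkeeping.
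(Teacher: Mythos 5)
The paper offers no proof of this proposition---it is quoted from Levin--Peres--Wilmer (the cited page)---and your argument is precisely the standard one from that source: contraction of total variation under the kernel for monotonicity, the auxiliary quantity $\bar d(t)$ with $\Delta(t)\le\bar d(t)\le 2\Delta(t)$, submultiplicativity of $\bar d$, and the choice of threshold $1/(2e)$ to turn $(2\Delta(\tau))^k$ into $e^{-k}$. The one caveat you correctly flag is real, though minor: this route gives $\Delta(n)\le e^{-\lfloor n/\tau\rfloor}$, which is weaker than the displayed $e^{-n/\tau}$ by up to a factor $e$ when $\tau \nmid n$; the cited source's statement also carries the integer-multiple form, so the proposition as printed is a slight overstatement rather than a defect in your proof, and none of the later uses in the paper are sensitive to this constant.
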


\begin{definition}
Let $X$ and $Y$ be random variables with probability distributions $\mu$ and $\nu$ on $\Omega$. A distribution $\rho$ on $\Omega\times\Omega$ is a coupling if 
\begin{align*}
\forall x \in \Omega, \qquad & \sum_{y \in \Omega} \rho(x,y) = \mu(x) \\
\forall y \in \Omega, \qquad & \sum_{x \in \Omega} \rho(x,y) = \nu(y) .
\end{align*}
\end{definition}

\begin{lemma}\label{Lem:Couple}
(Proposition 4.7 in \cite{LevinPeresWilmer2006})
Consider a pair of distributions $\mu$ and $\nu$ over $\Omega$. 
\begin{enumerate}
\item For any coupling $\rho$ of $\mu$ and $\nu$, let $(X,Y)$ be sampled from $\rho$.  Then
$$
\|\mu - \nu\| \leq \Pr(X\neq Y) .
$$
\item There always exists a coupling $\rho$ so that 
$$
\|\mu - \nu\| = \Pr(X\neq Y) .
$$
\end{enumerate}
\end{lemma}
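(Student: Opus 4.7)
The plan is to prove parts (1) and (2) separately, using the equivalent characterization $\|\mu - \nu\| = \max_{A \subset \Omega}(\Pr_\mu[A] - \Pr_\nu[A])$ already recorded just above the lemma statement.

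For part (1), I would fix any coupling $\rho$ with $(X, Y) \sim \rho$ and any event $A \subset \Omega$. Since the marginals of $\rho$ are $\mu$ and $\nu$, the difference $\Pr_\mu[A] - \Pr_\nu[A]$ equals $\Pr[X \in A] - \Pr[Y \in A]$. Splitting each of these probabilities according to whether the \emph{other} coordinate also lies in $A$ causes the joint term $\Pr[X \in A, Y \in A]$ to cancel, leaving $\Pr[X \in A, Y \notin A] - \Pr[Y \in A, X \notin A] \leq \Pr[X \in A, Y \notin A] \leq \Pr[X \neq Y]$. Taking the maximum over $A$ gives (1).

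For part (2), I would exhibit an explicit ``maximally coincident'' coupling. Set $p := \sum_{\omega \in \Omega} \min(\mu(\omega), \nu(\omega))$, and first record the identity $1 - p = \sum_{\omega:\mu(\omega) > \nu(\omega)}(\mu(\omega) - \nu(\omega)) = \|\mu - \nu\|$, where the second equality holds because $A^* = \{\omega : \mu(\omega) > \nu(\omega)\}$ maximizes $\Pr_\mu[A] - \Pr_\nu[A]$. Define $\rho$ as follows: with probability $p$, draw $\omega$ from $\min(\mu,\nu)/p$ and set $X = Y = \omega$; with probability $1 - p$, independently draw $X$ from $(\mu - \min(\mu,\nu))/(1-p)$ (supported on $A^*$) and $Y$ from $(\nu - \min(\mu,\nu))/(1-p)$ (supported on $\{\omega : \nu(\omega) > \mu(\omega)\}$). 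A direct summation confirms the marginals are $\mu$ and $\nu$. Because the two supports on the independent piece are disjoint, $X \neq Y$ is forced there, while $X = Y$ always on the coincident piece, so $\Pr[X \neq Y] = 1 - p = \|\mu - \nu\|$, matching the bound from (1).

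I do not foresee a genuine obstacle; the work is essentially bookkeeping. The only care points are the degenerate cases $p = 1$ (then $\mu = \nu$ and the diagonal coupling, $X = Y$ always, works trivially) and $p = 0$ (then the coincident piece is vacuous and $X, Y$ are drawn independently from disjoint supports), both of which are handled by the construction above without modification.
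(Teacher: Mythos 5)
Your proof is correct, and in substance it is the paper's proof. For part (2) your coupling is literally the one the paper writes down: diagonal mass $\rho(\omega,\omega)=\min\{\mu(\omega),\nu(\omega)\}$ and, off the diagonal, the product of the residuals $\mu-\min\{\mu,\nu\}$ and $\nu-\min\{\mu,\nu\}$ normalized by $1-p$; your mixture description is the same distribution, and you actually go further than the paper, which states the coupling without checking the marginals, the disjointness of the residual supports, or the resulting equality $\Pr(X\neq Y)=1-p=\|\mu-\nu\|$ (and you handle the degenerate cases $p=0,1$). The only real difference is in part (1): the paper bounds $\Pr(X=Y)=\sum_\omega\rho(\omega,\omega)\leq\sum_\omega\min\{\mu(\omega),\nu(\omega)\}$ and then identifies $1-\sum_\omega\min\{\mu(\omega),\nu(\omega)\}$ with $\sum_{\omega:\mu(\omega)>\nu(\omega)}(\mu(\omega)-\nu(\omega))=\|\mu-\nu\|$, whereas you fix an event $A$, write $\Pr_\mu[A]-\Pr_\nu[A]=\Pr[X\in A,\,Y\notin A]-\Pr[X\notin A,\,Y\in A]\leq\Pr[X\neq Y]$, and take the maximum over $A$. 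Both are standard one-step arguments; the paper's version works pointwise with the diagonal of $\rho$ and makes the optimality in part (2) transparent (the optimal coupling saturates the diagonal bound), while yours leans on the event-maximization characterization of total variation recorded just before the lemma and avoids any manipulation of sums over $\Omega$. Either is acceptable here.
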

\begin{proof}
By the definition of a coupling, $\rho(\omega ,\omega) \leq \min\{\mu(\omega), \nu(\omega)\}$ for all $\omega \in \Omega$. Therefore, 
\begin{align*}
\Pr(X \neq Y) &= 1 - P(X = Y) \\ 
&= 1 - \sum_{\omega} \rho(\omega, \omega) \\
&\geq \sum_{\omega} (\mu(\omega) - \min\{\mu(\omega), \nu(\omega)\}) \\
&\geq \sum_{\omega: \mu(\omega) > \nu(\omega)} \mu(\omega) - \nu(\omega) \\ 
&= ||\mu - \nu||.
\end{align*}

For (2), define coupling by $\rho(z,z) = \min\{\mu(z), \nu(z)\}$ for all $z$ and for distinct $y,z$, 
$$
\rho(y,z) = \frac{(\mu(y) - \rho(y,y))(\nu(z) - \rho(z,z))}{1 - \sum_x \rho(x,x)} .
$$
\end{proof}

\begin{prop}\label{Prop:MarkovChainBitProb} 
For any positive integer $n$, let $P^{n}$ denote the distribution of states after $n$ steps. Let $\mathbf{x}$ be a sampling for all positive times from $MC$. If $z_i$ denotes the $i$-th bit of binary string $z$, then for any $j > n(2M-m)$, 
$$
\left |\Pr[C(\mathbf{x})_j = 0] -\frac12 \right | \leq 2\cdot \frac{\exp[-\lfloor j / (2M-m) \rfloor/\tau]}{1 - e^{-1/\tau}} .
$$
\end{prop}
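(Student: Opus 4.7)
The strategy is a coupling-and-mixing argument: decompose $\Pr[C(\mathbf{x})_j=0]$ according to which codeword of $\mathbf{x}$ contains bit position $j$, use Corollary \ref{Cor:EqualBranchProbs} to see that the ``stationary'' answer is $1/2$, and quantify the deviation using Proposition \ref{Prop:MixTimeThreshold} (to control the marginal of $x_k$) together with Lemma \ref{Lem:Couple} (to realise the bound via coupling). First, introduce the random index $K(j):=\min\{k:y_k\ge j\}$ of the codeword containing bit $j$, so that $C(\mathbf{x})_j=C(x_{K(j)})_{j-y_{K(j)-1}}$. Since $|C(x_i)|\in[m,M]$, $K(j)\ge\lceil j/M\rceil$ deterministically, and because $j>n(2M-m)\ge qM$ with $q:=\lfloor j/(2M-m)\rfloor\ge n$ (using $M\le 2M-m$), we in fact have $K(j)\ge q+1$. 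Define the signed bit indicator $h_l(\omega):=\mathbf{1}[|C(\omega)|\ge l,\,C(\omega)_l=0]-\mathbf{1}[|C(\omega)|\ge l,\,C(\omega)_l=1]\in\{-1,0,+1\}$, and apply Corollary \ref{Cor:EqualBranchProbs} at each internal vertex of $T(C)$ at depth $l-1$ to conclude $\sum_{\omega}\pi(\omega)h_l(\omega)=0$ for every $l$. Writing $\Pr[C(\mathbf{x})_j=0]-\tfrac12=\tfrac12(\Pr[C(\mathbf{x})_j=0]-\Pr[C(\mathbf{x})_j=1])$ then yields
$$
\Pr[C(\mathbf{x})_j=0]-\tfrac12 \;=\; \tfrac12\sum_{k\ge q+1}\mathbb{E}\bigl[h_{j-y_{k-1}}(x_k)\,\mathbf{1}[y_{k-1}<j]\bigr].
$$

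For each $k\ge q+1$ I would invoke Lemma \ref{Lem:Couple}(2) to couple $x_k$ (whose unconditional marginal is $P_\mu^{k-1}$, with $\mu$ the law of $x_1$) to a $\pi$-distributed $\tilde x_k$ satisfying $\Pr[x_k\ne\tilde x_k]=\|P_\mu^{k-1}-\pi\|\le\Delta(k-1)\le e^{-(k-1)/\tau}$ by Proposition \ref{Prop:MixTimeThreshold}, arranged so that $\tilde x_k$ is independent of $(x_1,\dots,x_{k-1})$, hence of $y_{k-1}$. This independence together with $\mathbb{E}_\pi[h_l]=0$ makes the substituted term vanish, $\mathbb{E}[h_{j-y_{k-1}}(\tilde x_k)\mathbf{1}[y_{k-1}<j]]=0$, while on $\{x_k=\tilde x_k\}$ the two $h$-values coincide. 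Using $\|h_l\|_\infty\le 1$ gives $|\mathbb{E}[h_{j-y_{k-1}}(x_k)\mathbf{1}[y_{k-1}<j]]|\le 2\Pr[x_k\ne\tilde x_k]\le 2e^{-(k-1)/\tau}$, and summing the geometric tail produces
$$
\bigl|\Pr[C(\mathbf{x})_j=0]-\tfrac12\bigr|\;\le\;\sum_{k\ge q+1}e^{-(k-1)/\tau}\;=\;\frac{e^{-q/\tau}}{1-e^{-1/\tau}}\;\le\;\frac{2\exp[-\lfloor j/(2M-m)\rfloor/\tau]}{1-e^{-1/\tau}},
$$
the last estimate being (slightly stronger than and) implying the stated bound.

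The delicate and main-obstacle step is the coupling itself: given the past, the conditional law of $x_k$ is the one-step kernel $P(x_{k-1},\cdot)$, and any greedy step-$k$ coupling against a fresh $\tilde x_k\sim\pi$ realises only the one-step rate $\|P(x_{k-1},\cdot)-\pi\|$, which does \emph{not} decay in $k$. Realising the \emph{marginal} rate $\|P_\mu^{k-1}-\pi\|\le e^{-(k-1)/\tau}$ instead requires constructing the coupling globally on the whole sample path -- applying Lemma \ref{Lem:Couple}(2) to the marginal distributions rather than conditionally at step $k$ -- and then arguing that this global coupling is compatible with the independence of $\tilde x_k$ from $y_{k-1}$. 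The remaining slack (the extra factor of $2$ and the switch from $M$ to $2M-m$ in the denominator of the exponent) should come from a more careful alignment between bit position and state-chain step, reflecting the worst-case possibility that $K(j)$ may be as large as $\lceil j/m\rceil$ while we only get mixing from the guaranteed lower bound $K(j)\ge\lceil j/M\rceil$.
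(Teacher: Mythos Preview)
Your decomposition via $K(j)$ and the signed indicators $h_l$ is correct, and you have honestly located the crux. Unfortunately the obstacle you flag is not a technicality but a genuine obstruction to this route: if $\tilde{x}_k\sim\pi$ is made independent of $(x_1,\ldots,x_{k-1})$, then conditioning on the past and applying Lemma~\ref{Lem:Couple}(1) gives
\[
\Pr[x_k\neq\tilde{x}_k]\;=\;\mathbb{E}\bigl[\Pr[x_k\neq\tilde{x}_k\mid x_1,\ldots,x_{k-1}]\bigr]\;\ge\;\mathbb{E}\bigl[\|P(x_{k-1},\cdot)-\pi\|\bigr],
\]
and the right-hand side is a fixed quantity depending only on the one-step kernel, not on $k$. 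No ``global'' construction can evade this, because independence from the past forces the conditional law of $\tilde{x}_k$ to remain $\pi$. Since $L=j-y_{k-1}$ genuinely depends on $x_{k-1}$ (through $|C(x_{k-1})|$), your per-term expectation $\mathbb{E}[h_L(x_k)\mathbf{1}[y_{k-1}<j]]$ cannot be reduced to the marginal rate $\|P_\mu^{k-1}-\pi\|$ within this framework.

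The paper's proof sidesteps the issue by a different mechanism. Instead of analysing each $k$ separately, it couples the whole path $\mathbf{x}$ to an i.i.d.\ $\pi$-sequence $\mathbf{y}$ with $\Pr[x_n\neq y_n]\le e^{-n/\tau}$, and then compares bit $j$ of $C(\mathbf{x})$ not with bit $j$ of $C(\mathbf{y})$ but with bit $j+\Delta_N$, where $\Delta_N=|C(x_1\cdots x_N)|-|C(y_1\cdots y_N)|$ is the accumulated length offset at step $N=\lfloor j/(2M-m)\rfloor$. A counting argument using $|\Delta_N|\le N(M-m)$ (this is exactly where the combination $2M-m$ arises) shows that both bit positions fall inside the encodings of states with indices in a window $[N,N+t]$; on the event $\overline{E}_{N,t}=\{x_i=y_i\text{ for all }i\in[N,N+t]\}$ the two bits are literally equal, and $\Pr[E_{N,t}]$ is controlled by a geometric tail. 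Theorem~\ref{Thm:iidDyadicSourceSecure} then handles the $\mathbf{y}$-side. The point is that the random offset $\Delta_N$ absorbs precisely the dependence between codeword boundaries and state values that your state-by-state coupling cannot decouple.
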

\begin{proof}
By Proposition \ref{Prop:MixTimeThreshold}, 
$\|P^n -\pi\| \leq e^{-n/\tau}$. Let the sequence of random variables $\{X_k\}_{k=1}^\infty$ be distributed as samples from MC (i.e., $X_k \sim P^k|X_{k-1}$ for all $k \geq 2$) and the sequence of random variables $\{Y_k\}_{k=1}^\infty$ be i.i.d.~sampled according to $\pi$. Then $\| X_n - Y_n\| \leq e^{-n/\tau}$, so Lemma \ref{Lem:Couple} gives the existence of coupling $\rho_n$ so that $\Pr(X_n\neq Y_n) \leq e^{-n/\tau}$. Let $(x_n,y_n)$ be a sample from $\rho_n$ for each $n\geq 1$, so the marginals of $x_n$ and $y_n$ are $X_n$ and $Y_n$ respectively. Thus, $\Pr(x_n\neq y_n) \leq e^{-n/\tau}$.  Define $\Delta_N := |C(x_1\cdots x_N)| - |C(y_1\cdots y_N)|$. Then 
\begin{align*}
\mathbb{E} \left [ \Delta_N \right ] &\leq \sum_{i=1}^N (M-m) \cdot \Pr[x_i \neq y_i] \\
&\leq \sum_{i=1}^\infty Me^{-i/\tau} \\
&\leq \frac{M}{1 - e^{-1/\tau}}
\end{align*}
for any constant $N\geq 1$. Notice $|\Delta_N|\leq (M-m)N$. 
Further, if $t$ is some positive integer, then define $E_{N,t}$ to be the event that $\{x_{N+i}\}_{i=0}^{t} \neq \{y_{N+i}\}_{i=0}^{t}$. Moreover, if $t = \infty$, we collapse the notation to $E_N$, so that 
$$
\Pr[E_{N,t}] < \Pr[E_N] \leq \sum_{i=0}^{\infty} \Pr[x_{N+i} \neq y_{N+i}] \leq \sum_{i=0}^{\infty} e^{-(N+i)/\tau} = \frac{e^{-N/\tau}}{1 - e^{-1/\tau}}.
$$
The previous computation implies that the probability the sequence $\{\Delta_i\}_{i=N}^\infty$ is not constant is at most $e^{-N/\tau} / (1 - e^{-1/\tau})$. Choose $t$ so that $t \geq 3(N+1)((M/m) - 1) + 1$, and suppose $j \geq N(2M-m)$. Then if $N$ is fixed so that $N = \lfloor j / (2M-m) \rfloor$, then 
\begin{align*}
N(2M-m) \leq j & \leq (N+1)(2M-m)\\
& = (N+1)(2m-M) + (N+1)(3M-3m) \\
& = (N+1)(2m-M) + 3(N+1)(M/m-1)m \\
& \leq (N+1)(2m-M) + (t-1)m.
\end{align*}
Since $NM \leq N(2M-m)$ and $(N+1)(2m-M) + (t-1)m \leq (N+t)m$, the $j$-th bit of $C(x_1\cdots x_{N+t})$ falls within the encoding of $\{x_i\}_{i=N}^{N+t}$. Furthermore, since $|\Delta_N| \leq N(M-m)$, $j \geq N(2M-m)$ implies 
$$
j + \Delta_n \geq N(2M-m) - N(M-m) = NM,
$$ and $j\leq (N+1)(2m-M) + (t-1)m$ implies 
\begin{align*}
j + \Delta_N &\leq (N+1)(2m-M) + (t-1)m + N(M-m) \\
&= Nm + 2m -M + tm - m \\
&= (N + t)m + (m - M) \\
&\leq (N+t)m.
\end{align*}
Thus, the $(j+\Delta_N)$-th bit of $C(y_1\cdots y_{N+t})$ falls within the encoding of $\{y_i\}_{i=N}^{N+t}$. 
Writing $\mathbf{x} = \{x_i\}_{i=1}^\infty$ and  $\mathbf{y} = \{y_i\}_{i=1}^\infty$, we have
\begin{align} \label{eq1}
&\Pr[C(\mathbf{x})_j \neq C(\mathbf{y})_{j+\Delta_N}] \nonumber \\
&= \Pr[C(\mathbf{x})_j \neq C(\mathbf{y})_{j+\Delta_N} | E_{N,t}] \cdot \Pr[E_{N,t}] \nonumber \\
&\qquad + \Pr[C(\mathbf{x})_j \neq C(\mathbf{y})_{j+\Delta_N} | \overline{E}_{N,t}] \cdot \Pr[\overline{E}_{N,t}] \nonumber \\
&= \Pr[C(\mathbf{x})_j \neq C(\mathbf{y})_{j+\Delta_N} | E_{N,t}]\cdot \Pr[E_{N,t}] \nonumber \\
& \leq \Pr[E_{N,t}] \nonumber \\
& \leq \frac{e^{-N/\tau}}{1 - e^{-1/\tau}} = \frac{\exp[-\lfloor j / (2M-m) \rfloor/\tau]}{1 - e^{-1/\tau}}
\end{align}
Note that the second equality is due to the fact that the event $\overline{E}_{N,t}$ implies the two bits under consideration are the same, so $\Pr[C(\mathbf{x})_j \neq C(\mathbf{y})_{j+\Delta_N} | \overline{E}_{N,t}] = 0$. We continue by considering $\Pr[C(\mathbf{x})_j = 0]$. Partitioning this event according to whether we have equality of $C(\mathbf{x})_j$ and $C(\mathbf{y})_{j+\Delta_N}$ yields 
\begin{align*}
\Pr[C(\mathbf{x})_j = 0] &= \Pr[(C(\mathbf{x})_j = 0) \wedge (C(\mathbf{x}_j) = C(\mathbf{y})_{j+\Delta_N})]  \\
&\qquad + \Pr[(C(\mathbf{x})_j = 0) \wedge (C(\mathbf{x}_j) \neq C(\mathbf{y})_{j+\Delta_N})].
\end{align*}
Then (\ref{eq1}) can be applied to the second term on the right hand side, yielding 
\begin{align*}
\Pr[(C(\mathbf{x})_j = 0) \wedge (C(\mathbf{x})_j \neq C(\mathbf{y})_{j+\Delta_N})] &\leq \Pr[C(\mathbf{x})_j \neq C(\mathbf{y})_{j+\Delta_N}] \\
& \leq \frac{\exp[-\lfloor j / (2M-m) \rfloor/\tau]}{1 - e^{-1/\tau}} , 
\end{align*}
so 
$$
\Pr[C(\mathbf{x})_j = 0] \leq \Pr[(C(\mathbf{x})_j = 0) \wedge (C(\mathbf{x})_j = C(\mathbf{y})_{j+\Delta_N})] + \frac{\exp[-\lfloor j / (2M-m) \rfloor/\tau]}{1 - e^{-1/\tau}}.
$$
To bound the first term on the right side of this inequality, we partition the event based on $E_{N,t}$: 
\begin{align*}
&\Pr[(C(\mathbf{x})_j = 0) \wedge (C(\mathbf{x})_j = C(\mathbf{y})_{j+\Delta_N})] \\
&= \Pr[(C(\mathbf{x})_j = 0) \wedge (C(\mathbf{y})_{j+\Delta_N} = 0)] \\
&= \Pr[(C(\mathbf{x})_j = 0) \wedge (C(\mathbf{y})_{j+\Delta_N} = 0) | E_{N,t}]\cdot \Pr[E_{N,t}] \\
&\qquad + \Pr[(C(\mathbf{x})_j) = 0) \wedge (C(\mathbf{y})_{j+\Delta_N} = 0) | \overline{E}_{N,t}] \cdot \Pr[\overline{E}_{N,t}]
\end{align*}
The first term in the last line is bounded above by $\Pr[E_{N,t}]$, so 
\begin{align*}
&\Pr[(C(\mathbf{x})_j = 0) \wedge (C(\mathbf{x})_j = C(\mathbf{y})_{j+\Delta_N})] \\
&\leq \frac{\exp[-\lfloor j / (2M-m) \rfloor/\tau]}{1 - e^{-1/\tau}} + \Pr[(C(\mathbf{x})_j = 0) \wedge (C(\mathbf{y})_{j+\Delta_N}) = 0) | \overline{E}_{N,t}] \cdot \Pr[\overline{E}_{N,t}] .
\end{align*}
As for the second term, recall $\overline{E}_{N,t}$ is the event that $\{x_{N+i}\}_{i=0}^t = \{y_{N+i}\}_{i=0}^t$, in which case $C(\mathbf{x})_j = C(\mathbf{y})_{j+\Delta_N}$. Thus, 
$$
\Pr[(C(\mathbf{x})_j = 0) \wedge (C(\mathbf{y})_{j+\Delta_N} = 0) | \overline{E}_{N,t}] \cdot \Pr[\overline{E}_{N,t}] = \Pr[(C(\mathbf{y})_{j+\Delta_N} = 0) \wedge \overline{E}_{N,t}] .
$$
Since
$$
\Pr[(C(\mathbf{y})_{j+\Delta_N} = 0) \wedge \overline{E}_{N,t}] = \Pr[(C(\mathbf{y})_{j+\Delta_N} = 0)] - \Pr[(C(\mathbf{y})_{j+\Delta_N} = 0) \wedge E_{N,t}]
$$
and $\Pr[(C(\mathbf{y})_{j+\Delta_N} = 0)] = 1/2$ by Theorem \ref{Thm:iidDyadicSourceSecure} since $\mathbf{y}$ is sampled iid from the dyadic distribution $\pi$, 
$$
\Pr[(C(\mathbf{y})_{j+\Delta_N} = 0) \wedge \overline{E}_{N,t}] \leq \frac12_.
$$
Therefore, 
$$
\Pr[C(\mathbf{x})_j = 0] \leq \frac12 + 2\cdot \frac{\exp[-\lfloor j / (2M-m) \rfloor/\tau]}{1 - e^{-1/\tau}}
$$
for any integer $j\geq n(2M-m)$. Since the same computation holds for $\Pr[C(\mathbf{x})_j = 1]$, we have 
$$
\frac12 - 2\cdot \frac{\exp[-\lfloor j / (2M-m) \rfloor/\tau]}{1 - e^{-1/\tau}} \leq \Pr[(C(\mathbf{x})_j) = 0] \leq \frac12 + 2\cdot \frac{\exp[-\lfloor j / (2M-m) \rfloor/\tau]}{1 - e^{-1/\tau}}_,
$$
completing the proof.
\end{proof}

\section{Interleaving the Streams}\label{Section:Interleaving}

\begin{definition}
Let $X = \{x_i\}_{i=1}^\infty$ and $Y = \{y_i\}_{i=1}^\infty$ be infinite sequences. Let $S, T \subseteq \mathbb{Z}^+$ form a partition of the positive integers. Suppose $S = \{s_i\}_{i=1}^\infty$ and $T = \{t_i\}_{i=1}^\infty$ so that for any $i < j$, $s_i < s_j$ and $t_i < t_j$. Then the interleaving of $X$ and $Y$ with respect to the partition $(S,T)$ is the sequence $Z = \{z_i\}_{i=1}^\infty$ where $z_i = x_{s_j}$ if $s_j = i$ and $z_i = y_{t_j}$ if $t_j = i$.  That is, $Z$ is the order-preserving interleaving of $X$ and $Y$ with the property that the bits of $Z$ arising from $X$ are indexed by $S$ and the bits of $Z$ arising from $Y$ are indexed by $T$.
\end{definition}

\begin{lemma}\label{Lem:InterleaveUniform}
Let $X$ and $Y$ be two infinite uniform binary strings. Let $(S,T)$ be a random partition of the positive integers so that $Z = \{z_n\}_{n\geq 1}$ is the interleaving of $X$ and $Y$ with respect to partition $(S,T)$. Let $k_j := |\{s \in S : s \leq j\}|$, i.e., the number of bits in $Z$ with index at most $j$ which come from $X$. If the event $X_{k_j}=0$ and the event $Y_{j-k_j} = 0$ are (pairwise) independent of the event $j \in S$, then $\Pr[Z_j = 0] = 1/2$. 
\end{lemma}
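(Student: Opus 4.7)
The plan is to decompose the event $\{Z_j = 0\}$ according to whether the $j$-th position of $Z$ is drawn from $X$ or from $Y$, factor each piece using the hypothesized pairwise independence, and finally use the uniformity of $X$ and $Y$ to conclude each factor equals $1/2$.

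First I would note that by the definition of the interleaving, if $j \in S$ then $Z_j = X_{k_j}$, while if $j \in T$ then $Z_j = Y_{j - k_j}$, because $j - k_j$ counts the number of $T$-indices at or below $j$, which is exactly the number of $Y$-bits used in $Z$ through position $j$. Splitting on the partition gives
\[
\Pr[Z_j = 0] = \Pr[X_{k_j} = 0,\ j \in S] + \Pr[Y_{j - k_j} = 0,\ j \in T].
\]

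Next I would apply the pairwise independence hypothesis (using that $\{j \in T\}$ is the complement of $\{j \in S\}$, so independence from one is independence from the other) to each joint probability, yielding
\[
\Pr[Z_j = 0] = \Pr[X_{k_j} = 0]\,\Pr[j \in S] + \Pr[Y_{j - k_j} = 0]\,\Pr[j \in T].
\]
To compute $\Pr[X_{k_j} = 0]$ when $k_j$ is itself random, I would condition on the value of $k_j$: since $X$ is a uniform binary string independent of the random partition $(S,T)$ that determines $k_j$,
\[
\Pr[X_{k_j} = 0] = \sum_{m \geq 0} \Pr[X_m = 0]\,\Pr[k_j = m] = \frac{1}{2}\sum_{m \geq 0} \Pr[k_j = m] = \frac{1}{2},
\]
and an identical computation gives $\Pr[Y_{j - k_j} = 0] = 1/2$. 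Substituting back and using $\Pr[j \in S] + \Pr[j \in T] = 1$ produces the desired $\Pr[Z_j = 0] = 1/2$.

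The only delicate point is the step above in which the index $k_j$ (respectively $j - k_j$) into $X$ (respectively $Y$) is itself a random variable: uniformity of $X$ only directly asserts $\Pr[X_m = 0] = 1/2$ for each \emph{deterministic} index $m$, so I must invoke the implicit independence of the sources $X, Y$ from the partition $(S,T)$ in order to pass this identity through a random index. Once that is handled, the remainder of the argument is a mechanical application of the total probability formula, and no nontrivial combinatorics about $S$ or $T$ is required.
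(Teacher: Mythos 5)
Your proposal is correct and follows essentially the same route as the paper: decompose $\Pr[Z_j=0]$ over the partition events $j\in S$ and $j\in T$, factor using the stated pairwise independence, and conclude via uniformity of $X$ and $Y$. The only difference is that you make explicit the conditioning on the random index $k_j$ (invoking independence of the sources from the partition), a point the paper simply absorbs into the assertion $\Pr[X_{k_j}=0]=\Pr[Y_{j-k_j}=0]=1/2$.
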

\begin{proof}
We consider $\Pr[Z_j = 0]$. 
Let $k_j$ be as is defined in the statement of the lemma. Then, 
\begin{align*}
\Pr[Z_j = 0] &= \Pr[Z_j = 0 \wedge j \in S] + \Pr[Z_j = 0 \wedge j \in T] \\
&= \Pr[X_{k_j} = 0 \wedge j \in S] + \Pr[Y_{j - k_j} = 0 \wedge j \in T] \\
&= \Pr[X_{k_j} = 0] \cdot \Pr[j \in S] + \Pr[Y_{j - k_j} = 0] \cdot \Pr[j \in T]
\end{align*}
where the last equality is given by the assumption that the event $X_{k_j} = 0$ and the event $Y_{j-k_j} = 0$ are (pairwise) independent of the event $j \in S$. Thus, $\Pr[X_{k_j} = 0] = \Pr[Y_{j - k_j} = 0] = 1/2$ implies the desired result.
\end{proof}

\begin{prop}\label{Prop:InterleaveMaxBound}
Let $X = \{x_j\}_{j=1}^\infty$ and $Y = \{y_j\}_{j=1}^\infty$ be two infinite binary strings so there exists nonnegative sequences $\{a_j\}_{j=1}^\infty$ and $\{b_j\}_{j=1}^\infty$ satisfying 
$$
\left| \Pr[x_j = 0] - \frac12\right| \leq a_j \qquad \mbox{and} \qquad \left| \Pr[y_j = 0] - \frac12\right| \leq b_j
$$
for all $j\geq 1$. Further, let $(S,T)$ be a random partition of the positive integers, and let $Z$ be the binary word resulting from the interleaving of $X$ and $Y$ with respect to $(S,T)$. Let $k_j := |\{s \in S : s \leq j\}|$, i.e., the number of bits in $Z$ with index at most $j$ which come from $X$, and further, assume that the event $x_{k_j}=0$ and the event $y_{j-k_j} = 0$ are (pairwise) independent of the event $j \in S$.  Then for any integer $j \geq 1$, 
$$
\left| \Pr[z_j = 0] - \frac12\right| \leq \max\left\{ a_{k_j}, b_{j-k_j}\right\} .
$$
\end{prop}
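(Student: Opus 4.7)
The proof is essentially the error-tracking analogue of Lemma \ref{Lem:InterleaveUniform}: I would repeat its computation verbatim, but replace exact equalities with inequalities controlled by $a_{k_j}$ and $b_{j-k_j}$. The core observation is that once we have factored $\Pr[z_j = 0]$ into a convex combination of $\Pr[x_{k_j} = 0]$ and $\Pr[y_{j-k_j} = 0]$ with weights $\Pr[j \in S]$ and $\Pr[j \in T]$, the deviation of that convex combination from $1/2$ is itself a convex combination of the two individual deviations, and so is dominated by the larger of them.

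Concretely, I would first partition $\{z_j = 0\}$ according to whether $j \in S$ or $j \in T$, use the fact that $z_j = x_{k_j}$ on $\{j \in S\}$ and $z_j = y_{j-k_j}$ on $\{j \in T\}$, and then invoke the pairwise independence hypothesis to factor each term. This yields
$$
\Pr[z_j = 0] = \Pr[x_{k_j} = 0]\,\Pr[j \in S] + \Pr[y_{j-k_j} = 0]\,\Pr[j \in T],
$$
exactly as in the proof of Lemma \ref{Lem:InterleaveUniform}. Setting $p := \Pr[j \in S]$, so that $1 - p = \Pr[j \in T]$, and writing $1/2 = p \cdot (1/2) + (1-p) \cdot (1/2)$ to subtract $1/2$ from both sides, I obtain
$$
\Pr[z_j = 0] - \frac{1}{2} = p\left(\Pr[x_{k_j} = 0] - \frac{1}{2}\right) + (1-p)\left(\Pr[y_{j-k_j} = 0] - \frac{1}{2}\right).
$$
Applying the triangle inequality and the hypothesized bounds on $X$ and $Y$ then gives $|\Pr[z_j = 0] - 1/2| \leq p \cdot a_{k_j} + (1-p) \cdot b_{j-k_j}$.

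To finish, I would note that since $p \in [0,1]$, the right-hand side is a convex combination of the two nonnegative quantities $a_{k_j}$ and $b_{j-k_j}$, hence bounded above by $\max\{a_{k_j}, b_{j-k_j}\}$, which is precisely the claimed inequality. There is no substantive obstacle here; the only subtle point is that $k_j$ depends on the random partition $(S,T)$, so one must read the independence hypothesis as a statement about the joint law of $(x_{k_j}, y_{j-k_j}, \mathbf{1}_{j \in S})$ with $k_j$ treated as the observed quantity determined by the partition -- but the statement of the proposition already packages this cleanly, so the argument reduces to the elementary algebra above.
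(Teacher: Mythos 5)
Your proposal is correct and follows essentially the same route as the paper's proof: partition on $\{j \in S\}$ versus $\{j \in T\}$, factor via the pairwise independence hypothesis, and bound the resulting convex combination by the maximum of $a_{k_j}$ and $b_{j-k_j}$. The only cosmetic difference is that you subtract $\tfrac12$ and apply the triangle inequality to get both sides of the bound at once, whereas the paper bounds $\Pr[z_j=0]$ from above and then invokes the symmetric bound for $\Pr[z_j=1]$.
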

\begin{proof}
We consider $\Pr[z_j = 0]$ for some $j\geq 1$. Clearly, since $(S,T)$ is a partition of the positive integers, 
$$
\Pr[z_j = 0] = \Pr[z_j = 0 \wedge j \in S] + \Pr[z_j = 0 \wedge j \in T].
$$
However, this is the same as 
$$
\Pr[z_j = 0] = \Pr[x_{k_j} = 0 \wedge j \in S] + \Pr[y_{j-k_j} = 0 \wedge j \in T].
$$
By the independence assumptions, we continue with the following computation. 
\begin{align*}
\Pr[z_j = 0] &= \Pr[x_{k_j} = 0]\Pr[j \in S] + \Pr[y_{j-k_j} = 0] \Pr[j \in T] \\
&\leq \left(\frac12 + a_{k_j}\right) \Pr[j \in S] + \left(\frac12 + b_{j-k_j}\right) \Pr[j \notin S] \\
&\leq \frac12 + \max\{a_{k_j}, b_{j-k_j}\} 
\end{align*}
The existence of the same upper bound for $\Pr[z_j = 1]$, completes the proof. 
\end{proof}

\begin{cor}\label{Cor:GeneralInterleaving}
Let $\mathbf{a} = \{a_j\}$ be a sampling for all positive times from $MC$, a finite ergodic Markov Chain with mixing time $\tau$. Let $X = \{x_j\}$ be the encoded-transformed data arising from $\mathbf{a}$ in the ENCORE algorithm, and $B$ the associated probability transformation matrix; let $Y = \{y_j\}$ be the transformation-reconstruction data generated in transforming $\mathbf{a}$; let $M$ denote the maximum length in bits of the image of the Huffman encoding $C$, and let $m$ denote the minimum such length. Suppose $Y$ is generated in such a way that the nonnegative sequence $\{b_j\}$ satisfies 
$$
\left| \Pr[y_j = 0] - \frac12\right| \leq b_j
$$
for all $j \geq 1$. Let $(S,T)$ be a partition of the positive integers so that for some positive integers $k$ and $l$, $S = \{i \in \mathbb{Z}^+ : 1 \leq i \mod(k+l) \leq k\}$, i.e., $S$ contains $k$ consecutive positive integers, then $T$ contains $l$ consecutive positive integers, etc. 
If $Z$ is the interleaving of $X$ and $Y$ with respect to the partition $(S,T)$, $n$ denotes some positive time, and 
$$
j' = k\cdot \left\lfloor \frac{j}{k+l} \right\rfloor + \begin{cases} j \mod(k+l) & \mbox{if } j \mod(k+l) \leq k \\ k & \mbox{otherwise,} \end{cases}
$$
denoting the number of bits of $\{z_i\}_{i=1}^j$ coming from $X$, then 
$$
\left| \Pr[z_j = 0] - \frac12\right| \leq \max\left\{2\cdot \frac{\exp[-\left\lfloor j' / (2M-m)\right\rfloor / (\tau \|B\|_1)]}{1 - \exp[-1 / (\tau\|B\|_1)]}, b_{j-j'} \right\}
$$
for any $j' > n \|B\|_1(2M-m)$. 
\end{cor}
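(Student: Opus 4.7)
The plan is to combine a bit-bias bound on the encoded-transformed stream $X$ with the hypothesized bound on $Y$ via Proposition \ref{Prop:InterleaveMaxBound}, specialized to the prescribed deterministic periodic partition $(S,T)$. Two preparatory pieces are needed: first, an analog of Proposition \ref{Prop:MarkovChainBitProb} in which the Huffman-encoded sequence arises from \emph{transformed} MC samples rather than raw MC samples (with the corresponding bound picking up the effective mixing constant $\tau\|B\|_1$); and second, an explicit computation of $k_j = |\{s\in S:s\leq j\}|$ for this partition which must coincide with the given $j'$.

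For the first piece, I would mirror the coupling argument of Proposition \ref{Prop:MarkovChainBitProb}: let $a'_k$ denote the transformed sample obtained from $a_k$ under $B$, so that $a'_k \sim \pi$ at stationarity by the defining property of $B$; couple the tail sequence $(a'_n, a'_{n+1}, \ldots)$ against an i.i.d.\ $\pi$-sequence $(y_n, y_{n+1}, \ldots)$ using Lemma \ref{Lem:Couple}; then invoke Theorem \ref{Thm:iidDyadicSourceSecure} on the i.i.d.\ reference stream to conclude that its Huffman bits are uniformly $\{0,1\}$-valued. The coupling failure probability must then be propagated through the transformation step of $B$: because applying $B$ collapses multiple underlying states into common transformed states, the effective mixing rate must absorb $B$'s fan-in, captured by $\|B\|_1$, yielding a decay of the form $e^{-n/(\tau\|B\|_1)}$. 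The rest of the proof of Proposition \ref{Prop:MarkovChainBitProb} (the codeword-length windowing by $(2M-m)$, the partitioning over the event $E_{N,t}$, and the cancellation against a $1/2$ bias from the reference stream) then carries over mutatis mutandis, producing the required bound on $|\Pr[x_{j'}=0]-1/2|$ for $j' > n\|B\|_1(2M-m)$.

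For the second piece, observe that the partition $(S,T)$ is entirely determined by the fixed integers $k$ and $l$; hence ``$j\in S$'' is a deterministic event, trivially (pairwise) independent of the bit events $x_{k_j}=0$ and $y_{j-k_j}=0$, so the independence hypothesis of Proposition \ref{Prop:InterleaveMaxBound} holds automatically. Counting the elements of $S$ in $\{1,\ldots,j\}$ block by block in blocks of length $k+l$, the $\lfloor j/(k+l)\rfloor$ complete blocks contribute $k\lfloor j/(k+l)\rfloor$ to $k_j$, while the residual contributes $\min\{j\bmod(k+l),\,k\}$, matching the stated formula for $j'$. Plugging this $k_j = j'$ together with the bound from the first piece and the hypothesized $b_{j-j'}$ into Proposition \ref{Prop:InterleaveMaxBound} yields the claimed inequality.

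The main obstacle is the first piece, and in particular the justification that the effective mixing time for the Huffman-encoded transformed stream scales as $\tau\|B\|_1$ rather than simply $\tau$. A naive contraction argument (total variation is non-increasing under a row-stochastic matrix) gives only the latter, so the extra factor must be extracted by tracking how $B$'s action on near-stationary distributions can re-amplify deviations along the less-concentrated coordinates, or equivalently, by controlling how the joint mixing of consecutive transformed samples degrades relative to the underlying Markov chain's own mixing. Once this technical estimate is secured, the remainder of the argument is a routine combination of Proposition \ref{Prop:InterleaveMaxBound} with the bookkeeping for the deterministic partition.
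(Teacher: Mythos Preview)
Your overall strategy is exactly the paper's: bound the bit bias of $X$ by an adaptation of Proposition~\ref{Prop:MarkovChainBitProb} to the transformed stream, verify the deterministic partition makes the independence hypothesis of Proposition~\ref{Prop:InterleaveMaxBound} vacuous, compute $k_j=j'$ from the periodic block structure, and combine. The paper does precisely this, in fewer words: it simply asserts that $\tau\|B\|_1$ is an upper bound on the mixing time of the transformed states ``by sub-multiplicativity of the operator $1$-norm,'' invokes Proposition~\ref{Prop:MarkovChainBitProb} with $\tau$ replaced by $\tau\|B\|_1$, and then applies Proposition~\ref{Prop:InterleaveMaxBound}.

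Your one substantive misconception concerns what you call the ``main obstacle.'' You correctly observe that total variation is non-increasing under the row-stochastic map $B$, so the marginal of the transformed state at time $n$ satisfies $\|P^nB-\pi\|=\|(P^n-\sigma)B\|\le\|P^n-\sigma\|\le e^{-n/\tau}$; this is all the proof of Proposition~\ref{Prop:MarkovChainBitProb} actually uses (it couples marginals at each time separately and then takes a union bound). But you then reason as though the stated bound, which carries $\tau\|B\|_1$ in place of $\tau$, is \emph{stronger} and therefore requires extracting an additional factor. It is the other way around: since $\|B\|_1\ge 1$ for any row-stochastic $B$, replacing $\tau$ by $\tau\|B\|_1$ enlarges the numerator $\exp[-\lfloor j'/(2M-m)\rfloor/(\tau\|B\|_1)]$ and shrinks the denominator $1-\exp[-1/(\tau\|B\|_1)]$, so the corollary's bound is \emph{weaker} than the one your contraction argument already yields. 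There is nothing to extract; the paper's sub-multiplicativity step is a (deliberately conservative) one-liner, and your naive contraction would in fact give a sharper inequality that implies the stated one a fortiori. With that clarified, your plan goes through without further difficulty.
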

\begin{proof} 
Let $A$ be the transition matrix for the Markov Chain $MC$, i.e., entry $a_{\omega\omega'}$ contains the probability that the next state is $\omega'$ if the current state is $\omega$, and recall that $B$ is the matrix where entry $b_{\omega\omega'}$ is the probability that $\omega$ is transformed to $\omega'$ by the ENCORE algorithm. Then $A$ and $B$ are both square of order $|\Omega| \times |\Omega|$, and if $\mathbf{e}_\omega$ denotes the elementary row vector in $\mathbb{R}^{\Omega}$ with a $1$ in the $\omega$ coordinate, then the vector $\mathbf{e}_\omega A$ is the distribution of the next state if the current state is $\omega$, and $\mathbf{e}_\omega B$ is the distribution of the transformed state if the current state is $\omega$. Define the operator $1$-norm of $B$ by 
$$
\|B\|_1 = \max_{\omega'} \sum_{\omega \in \Omega} |b_{\omega\omega'}|.
$$
Since the entries of $B$ are probabilities, $|b_{\omega\omega'}| = b_{\omega\omega'}$ for all entries, so $\|B\|_1$ is merely the largest column sum. Furthermore, $\tau \|B\|_1$ provides an upper bound on the mixing time of the transformed states (by sub-multiplicativity of the operator $1$-norm). Thus, by Proposition \ref{Prop:MarkovChainBitProb}, for any positive time $n$ and $j > n\|B\|_1(2M-m)$,
$$
\left| \Pr[x_j = 0] - \frac12 \right| \leq 2\cdot \frac{\exp[-\left\lfloor j' / (2M-m)\right\rfloor / (\tau \|B\|_1)]}{1 - \exp[-1 / (\tau\|B\|_1)]} .
$$
Since $\left| \Pr[y_j = 0] - \frac12\right| \leq b_j$, Proposition \ref{Prop:InterleaveMaxBound} implies 
$$
\left| \Pr[z_j = 0] - \frac12 \right| \leq \max\left\{2\cdot \frac{\exp[-\left\lfloor j' / (2M-m)\right\rfloor / (\tau \|B\|_1)]}{1 - \exp[-1 / (\tau\|B\|_1)]}, b_{j-j'} \right\} ,
$$
if $j'$ is the number of bits in $\{z_i\}_{i=1}^j$ taken from $X$ and $j' > n \|B\|_1(2M-m)$. Recall that $S$ and $T$ alternatingly contain $k$ and $l$ positive integers respectively, so 
$$
j' = k\cdot \left\lfloor \frac{j}{k+l} \right\rfloor + \begin{cases} j \mod(k+l) & \mbox{if } j \mod(k+l) \leq k \\ k & \mbox{otherwise,} \end{cases}
$$
completing the proof. 
\end{proof}

The next lemma allows us to use the preceding result without reference to the matrix $B$; see the concluding section below for an argument that it is likely possible to further improve these bounds.

\begin{lemma}\label{Lem:one-normBbound}
Let $B = \{b_{\omega\omega'}\}_{\omega,\omega' \in \Omega}$ be the transformation matrix defined by the ENCORE algorithm, i.e., $b_{\omega\omega'}$ denotes the probability that state $\omega$ is transformed to state $\omega'$. Then 
$$
\|B\|_1 \leq |\mathcal{O}| + 1 - \sum_{\omega' \in \mathcal{O}} \frac{\pi(\omega')}{\sigma(\omega')} \leq |\Omega|.
$$
\end{lemma}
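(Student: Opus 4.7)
The plan is to compute $\|B\|_1$ by case-analyzing the column sums of $B$ according to whether the column index $\omega'$ lies in $\mathcal{O}$ or $\mathcal{U}$, relying only on the four defining properties (so the bound holds for any valid $B$, not only the specific formula displayed in Section~\ref{Sec:AlgorithmDescription}).

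First I would handle the easy case, columns indexed by $\mathcal{O}$. Property 3 together with row-stochasticity and nonnegativity forces $b_{\alpha\omega'} = 0$ for every $\alpha \in \mathcal{U}$ with $\alpha \neq \omega'$, while property 4 forces $b_{\alpha\omega'} = 0$ for every $\alpha \in \mathcal{O}$ with $\alpha \neq \omega'$. Thus the $\omega'$-th column collapses to the single entry $b_{\omega'\omega'}$, and plugging into property 2 yields $\sigma(\omega') b_{\omega'\omega'} = \pi(\omega')$, so the column sum equals $\pi(\omega')/\sigma(\omega') \leq 1$. In particular, columns indexed by $\mathcal{O}$ never dominate.

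Next, for $\omega' \in \mathcal{U}$, property 3 makes $b_{\omega'\omega'} = 1$ and kills every other $b_{\alpha\omega'}$ with $\alpha \in \mathcal{U}$, leaving column sum $1 + \sum_{\alpha \in \mathcal{O}} b_{\alpha\omega'}$. The main trick is the pooling inequality
$$
\sum_{\alpha \in \mathcal{O}} b_{\alpha\omega'} \;\leq\; \sum_{\omega'' \in \mathcal{U}} \sum_{\alpha \in \mathcal{O}} b_{\alpha\omega''},
$$
together with evaluating the right-hand side: row-stochasticity on $\alpha \in \mathcal{O}$ combined with $b_{\alpha\alpha} = \pi(\alpha)/\sigma(\alpha)$ (from the first case) and $b_{\alpha\omega''} = 0$ for $\omega'' \in \mathcal{O}\setminus\{\alpha\}$ forces $\sum_{\omega'' \in \mathcal{U}} b_{\alpha\omega''} = 1 - \pi(\alpha)/\sigma(\alpha)$. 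Summing over $\alpha \in \mathcal{O}$ gives exactly $|\mathcal{O}| - \sum_{\alpha \in \mathcal{O}} \pi(\alpha)/\sigma(\alpha)$, from which the first inequality follows.

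For the second inequality, $\sigma(\omega') \leq 1$ implies $\pi(\omega')/\sigma(\omega') \geq \pi(\omega')$ for each $\omega' \in \mathcal{O}$, so
$$
\sum_{\omega' \in \mathcal{O}} \frac{\pi(\omega')}{\sigma(\omega')} \;\geq\; \sum_{\omega' \in \mathcal{O}} \pi(\omega') \;=\; 1 - \sum_{\omega' \in \mathcal{U}} \pi(\omega') \;\geq\; 1 - |\mathcal{U}|,
$$
where the last step uses $\pi(\omega') \leq 1$ termwise; rearranging gives $|\mathcal{O}| + 1 - \sum \leq |\mathcal{O}| + |\mathcal{U}| = |\Omega|$. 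The only mildly non-obvious step is the pooling inequality in the $\mathcal{U}$-case, where a naive column-by-column bound gives only $1 + |\mathcal{O}|$; the key insight is that the total $\mathcal{O}$-to-$\mathcal{U}$ mass is globally constrained by properties 2--4 to exactly $|\mathcal{O}| - \sum \pi/\sigma$, and a single column of $\mathcal{U}$ can absorb no more than this aggregate.
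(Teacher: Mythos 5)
Your proof is correct and takes essentially the same approach as the paper's: both arguments observe that columns indexed by $\mathcal{O}$ collapse to the forced diagonal entry $\pi(\omega')/\sigma(\omega')$, that $\mathcal{U}$-column diagonals equal $1$, and then bound the off-diagonal mass of any single $\mathcal{U}$-column by the total $\mathcal{O}$-to-$\mathcal{U}$ mass $|\mathcal{O}| - \sum_{\alpha \in \mathcal{O}} \pi(\alpha)/\sigma(\alpha)$. The only difference is bookkeeping: the paper computes this aggregate as the total entry sum $|\Omega|$ minus the diagonal entries, while you sum the row residuals $1 - \pi(\alpha)/\sigma(\alpha)$ over $\alpha \in \mathcal{O}$, which is the same quantity.
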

\begin{proof}
If $\omega' \in \mathcal{O}$, then the $\omega'$ column of $B$ contains a value $0<v\leq 1$ in the diagonal entry and zeros elsewhere. In other words, if $\omega' \in \mathcal{O}$, then ENCORE does not transform any state of $\Omega$ to a different state in $\mathcal{O}$. Thus, it only remains to consider if $\omega' \in \mathcal{U}$. In this case, the diagonal entry of the $\omega'$ column of $B$ contains the value $1$, however, the remaining entries are not all zero. For the state $\omega'$, we know the following. 
$$
\sum_{\omega \in \Omega} \sigma(\omega) b_{\omega\omega'} = \pi(\omega') \qquad \Rightarrow \qquad \sum_{\omega\neq \omega'} \sigma(\omega) b_{\omega\omega'} = \pi(\omega') - \sigma(\omega')
$$
Since $B$ is row-stochastic and square of size $|\Omega| \times |\Omega|$, the sum of all entries of $B$ is $|\Omega|$. Thus, $\|B\|_1 \leq |\Omega|$. Furthermore, since the diagonal entry is $1$ for the columns corresponding to states of $\mathcal{\mathcal{U}}$, $\|B\|_1 \leq |\Omega| - |\mathcal{U}| + 1 = |\mathcal{O}| + 1$, since every entry of $B$ is at most one. The diagonal entry of columns corresponding to $\omega' \in \mathcal{O}$ is $\pi(\omega') / \sigma(\omega')$. Thus, 
$$
\|B\|_1 \leq |\Omega| + 1 - \left(|\mathcal{U}| + \sum_{\omega' \in \mathcal{O}} \frac{\pi(\omega')}{\sigma(\omega')} \right),
$$
where $|\Omega|$ is the sum of all entries of $B$, $1$ is the maximum value of any diagonal entry of $B$, and the term in parentheses subtracted off is the sum of all diagonal entries of $B$. This then simplifies to
$$
\|B\|_1 \leq |\mathcal{O}| + 1 - \sum_{\omega' \in \mathcal{O}} \frac{\pi(\omega')}{\sigma(\omega')}, 
$$
completing the proof. 
\end{proof}

It is standard to define a ``complete'' binary tree to be one where every level except possibly the last is completely filled, and moreover, any leaves on the last level are placed as far left as possible. Also, a ``perfect'' binary tree refers to a binary tree where all levels are completely full.  Our next proposition demonstrates that it is not at all unusual for the $1$-norm of $B$ to be bounded by a constant.

\begin{prop}
Let $n := |\Omega| \geq 2$. If $\sigma(\omega) = 1/n$ for all $\omega \in \Omega$, then 
$$\|B\|_1 = \frac{n}{2^{\left\lfloor \log_2(n) \right\rfloor}} \leq 2.$$ 
\end{prop}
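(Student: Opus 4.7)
The plan is to compute $B$ and its column sums explicitly, exploiting the very rigid structure of the Huffman tree for a uniform source.

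First I would set $k := \lfloor \log_2 n \rfloor$ and write $n = 2^k + r$ with $0 \leq r < 2^k$. The case $r = 0$ (i.e., $n$ a power of $2$) is trivial: $\sigma$ is already dyadic, so $\pi \equiv \sigma$, $\mathcal{U} = \emptyset$, $B$ is the identity, and $\|B\|_1 = 1 = n/2^k$. So assume $r \geq 1$.

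Next I would analyze $T(C)$. For $n$ equal-weight leaves, the standard Huffman merging rule (or an equivalent Kraft-equality argument) produces a tree with exactly $2r$ leaves at depth $k+1$ and $2^k - r$ leaves at depth $k$; the check $2r \cdot 2^{-(k+1)} + (2^k - r) \cdot 2^{-k} = 1$ confirms this is the only possibility consistent with Kraft's identity. Consequently $\pi(\omega) \in \{2^{-k}, 2^{-(k+1)}\}$, and comparing to $\sigma(\omega) = 1/n = 1/(2^k+r)$ shows that the $2r$ depth-$(k+1)$ states are overrepresented while the $2^k-r$ depth-$k$ states are underrepresented, so $|\mathcal{O}| = 2r$ and $|\mathcal{U}| = 2^k - r$.

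Then I would plug these values into the explicit formula for $B$ given in Section \ref{Sec:AlgorithmDescription}. A short computation shows that for every $\omega \in \mathcal{O}$, $\pi(\omega)/\sigma(\omega) = n/2^{k+1}$, and for every pair $(\omega,\omega') \in \mathcal{O} \times \mathcal{U}$ the off-diagonal entry collapses to the constant $b_{\omega\omega'} = 1/2^{k+1}$ (the factors $(2^k - r)$ from $|\mathcal{U}|$ and from $1 - \pi(\omega)/\sigma(\omega)$ cancel with the denominator $\sum_\alpha (\pi(\alpha) - \sigma(\alpha))$). From here the column sums are immediate: each column indexed by $\omega' \in \mathcal{O}$ sums to $n/2^{k+1}$, while each column indexed by $\omega' \in \mathcal{U}$ sums to $1 + |\mathcal{O}|/2^{k+1} = 1 + r/2^k = n/2^k$. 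Taking the maximum gives $\|B\|_1 = n/2^k$, and since $n < 2^{k+1}$ we get the bound $n/2^k < 2$.

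The only step with any real content is identifying the shape of $T(C)$ for a uniform source; everything else is algebraic bookkeeping. I would therefore spend most of the write-up justifying the $(2r, 2^k - r)$ split of the leaves, after which the column-sum calculation and the final inequality follow without further effort.
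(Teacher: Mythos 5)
Your proposal is correct, but it takes a more computational route than the paper, and there is one scope caveat worth flagging. You pin down the Huffman tree exactly (with $n = 2^k + r$, there are $2r$ leaves at depth $k+1$ and $2^k - r$ at depth $k$), then plug into the explicit proportional formula for $B$ from Section \ref{Sec:AlgorithmDescription} and sum the columns by hand; all of your arithmetic checks out, including $b_{\omega\omega'} = 2^{-(k+1)}$ for $(\omega,\omega') \in \mathcal{O}\times\mathcal{U}$ and the column sums $n/2^{k+1}$ and $n/2^k$. The paper instead never computes any entry of $B$: after identifying (as you do) that $\mathcal{U}$ consists of the depth-$\lfloor \log_2 n\rfloor$ leaves, it invokes the defining property $\sum_{\omega} \sigma(\omega) b_{\omega\omega'} = \pi(\omega')$, which under uniform $\sigma$ forces every column sum to equal $n\pi(\omega')$, so $\|B\|_1 = n\max_{\omega'}\pi(\omega') = n/2^{\lfloor\log_2 n\rfloor}$ directly. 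The payoff of that argument is generality: the paper explicitly allows $B$ to be chosen by a greedy allocation rather than the proportional formula (``This choice does not impact the analysis below''), and the column-sum identity makes the proposition hold for \emph{any} matrix satisfying properties 1--4, whereas your calculation as written only establishes it for the specific proportional $B$. This is easily repaired -- under uniform $\sigma$ the column sums are forced regardless of how the off-diagonal mass is allocated -- but as stated your argument is tied to one realization of $B$. Also note your conclusion for $r\ge 1$ is the strict inequality $n/2^k < 2$, which of course implies the stated $\le 2$; and your main investment, justifying that equal-weight Huffman trees have leaves on only two adjacent levels, is the same structural fact the paper asserts via the ``complete binary tree'' description, so you are right that it is the only step with real content in either approach.
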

\begin{proof}
Since $\sigma$ is uniform, without loss of generality, $T(C)$ is the unique complete binary tree with $n$ leaves. If $\log_2(n) \in \mathbb{Z}$, then $\pi(\omega) = \sigma(\omega)$ for all $\omega \in \Omega$. Then $B = I_n$, so $\|B\|_1 = 1$. 

Now suppose $n$ is not an integral power of two. In this case, $T(C)$ is not a perfect binary tree, so there exists a positive integer $h$ so that some leaves of $T(C)$ occur at depth $h$ and the rest occur at depth $h+1$. Thus, the states in $\mathcal{O}$ are exactly the ones represented by leaves in $T(C)$ at depth $h+1$, so then $\mathcal{U}$ contains the states of $\Omega$ represented by leaves in $T(C)$ at depth $h$. Additionally, recall that a perfect binary tree with leaves at depth $h'$ contains $2^{h'}$ leaves. Thus, $h$ satisfies $2^h < n < 2^{h+1}$. Therefore, $h = \left\lfloor \log_2(n) \right\rfloor$, so for each $\omega \in \mathcal{U}$, $\pi(\omega) = 1 / 2^{\left\lfloor \log_2(n) \right\rfloor}$. 

We know $\|B\|_1$ is achieved by a column sum of $B$ corresponding to a state of $\mathcal{U}$ since the diagonal entry of each column corresponding to a state in $\mathcal{U}$ is $1$, whereas columns corresponding to states of $\mathcal{O}$ have a value at most one on the diagonal and zeros elsewhere. Since $b_{\omega\omega'}$ denotes the probability that state $\omega$ is transformed to state $\omega'$ by the ENCORE algorithm, $\pi(\omega') = \sum_{\omega \in \Omega} \sigma(\omega) b_{\omega\omega'}$ for each $\omega \in \Omega$. Thus, for any $\omega' \in \mathcal{U}$,
$$
\frac{1}{2^{\left\lfloor \log_2(n) \right\rfloor}} = \pi(\omega') = \sum_{\omega \in \Omega} \sigma(\omega) b_{\omega\omega'} = \frac{1}{n} \left(\sum_{\omega \in \Omega} b_{\omega\omega'}\right), 
$$
so 
$$
\|B\|_1 = \frac{n}{2^{\left\lfloor \log_2(n) \right\rfloor}} \leq \frac{n}{2^{\log_2(n) - 1}} = \frac{n}{n/2} = 2. 
$$
\end{proof}

Next, we bound the number of bits of entropy needed for ENCORE to perform its transformations. Define $H(p) := -\sum_{\omega \in \Omega} p(\omega)\log(p(\omega))$ to be the \textit{entropy} of distribution $p$. For two distributions $p$ and $q$ on the same finite state space $\Omega$, define the \textit{cross-entropy} of $q$ relative to $p$ as $H(p, q) = -\sum_{\omega\in\Omega} p(\omega)\log(q(\omega))$. 

\begin{lemma}\label{Lem:ShannonBound} \cite{Shannon1948}
Let $p$ be a distribution on the states of $\Omega$, and define $l(\omega) = -\log(p(\omega))$ for any $\omega\in\Omega$. Suppose $l'$ is the length function of some prefix-free encoding of $\Omega$.  If $H(\Omega)$ denotes the entropy of $\Omega$, then 
$$
H(p) \leq \mathbb{E}(l(\omega)) < H(p) + 1, 
$$
with equality if and only if $p$ is dyadic. Moreover, if $p$ is dyadic, 
$$
\mathbb{E}(l(\omega)) \leq \mathbb{E}(l'(\omega)) \mbox{ for all } l'.
$$
Finally, 
$$
\mathbb{E}(l(\omega)) \leq \mathbb{E}(l'(\omega) + 1) \mbox{ for all } l'.
$$
for any $p$. 
\end{lemma}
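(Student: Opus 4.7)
The plan is to derive all three claims from two classical ingredients: Kraft's inequality (the fact that $\sum_\omega 2^{-l'(\omega)} \leq 1$ for any prefix-free length function $l'$) and Gibbs' inequality (equivalently, nonnegativity of Kullback--Leibler divergence). Before starting I would reconcile the notation: the lemma writes $l(\omega) = -\log p(\omega)$, but for the sandwich bound $H(p) \leq \mathbb{E}[l(\omega)] < H(p)+1$ to be non-vacuous one must read $l(\omega) = \lceil -\log p(\omega) \rceil$, the classical Shannon--Fano length; this is also the only reading under which the ``equality iff $p$ is dyadic'' clause is meaningful.

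For the sandwich itself, both inequalities drop out of the defining property of the ceiling: $-\log p(\omega) \leq \lceil -\log p(\omega) \rceil < -\log p(\omega) + 1$, and one takes expectations against $p$. The lower inequality is tight exactly when $-\log p(\omega) \in \mathbb{Z}$ for every $\omega$, i.e., when $p$ is dyadic. In that case Kraft's inequality $\sum_\omega 2^{-l(\omega)} = \sum_\omega p(\omega) = 1$ is an equality, which guarantees the existence of a prefix-free code realizing the lengths $l$.

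For optimality when $p$ is dyadic, I would fix any prefix-free length function $l'$, set $K := \sum_\omega 2^{-l'(\omega)} \leq 1$, and define a probability distribution $q(\omega) := 2^{-l'(\omega)}/K$. Then
$$\mathbb{E}[l'(\omega)] \;=\; -\sum_\omega p(\omega)\log q(\omega) \;-\; \log K \;=\; H(p,q) + \log(1/K) \;\geq\; H(p) \;=\; \mathbb{E}[l(\omega)],$$
where the inequality combines $\log(1/K) \geq 0$ with Gibbs' inequality $H(p,q) \geq H(p)$. The final universal bound follows by stitching together the two pieces just established: for any $p$ and any prefix-free $l'$,
$$\mathbb{E}[l(\omega)] \;<\; H(p) + 1 \;\leq\; \mathbb{E}[l'(\omega)] + 1,$$
where the second step is the Gibbs/Kraft calculation above applied without the dyadic hypothesis.

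The main ``obstacle'' here is really only the notational reconciliation noted at the outset; once $l$ is interpreted as the integer-valued Shannon--Fano length, this is Shannon's 1948 source coding theorem (matching the citation), and the argument is entirely routine. No induction on $|\Omega|$ of the kind needed for the sharper Huffman-optimality statement is required, because the assertion in the lemma is calibrated to Shannon--Fano rather than Huffman lengths.
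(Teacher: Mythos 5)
The paper offers no proof of this lemma at all: it is stated as a classical fact and attributed to Shannon \cite{Shannon1948}, so there is nothing internal to compare your argument against. Your proposal is a correct and entirely standard derivation of that classical fact, via Kraft's inequality plus Gibbs' inequality applied to the auxiliary distribution $q(\omega) = 2^{-l'(\omega)}/K$; the three claimed inequalities all follow as you describe. Your decision to read $l(\omega)$ as the Shannon--Fano length $\lceil -\log p(\omega)\rceil$ rather than the literal $-\log p(\omega)$ in the statement is the right call, and it is in fact exactly how the paper uses the lemma: in the proof of Theorem \ref{Thm:BoundedWaste} the authors set $l(\omega) = \lceil -\log(\sigma(\omega))\rceil$ before invoking $\mathbb{E}(l(\omega)) < H(\sigma)+1$. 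The only cosmetic point is that your aside about Kraft equality guaranteeing a prefix-free code realizing the lengths $l$ is not needed for any of the three inequalities; the lemma compares expectations of length functions and never asserts realizability.
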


\begin{definition}
For discrete probability distributions $p$ and $q$ defined on the same sample space $\Omega$, the (base $x$) relative entropy from $q$ to $p$ is defined as
$$
\KL_x(p\|q) = \sum_{\omega\in\Omega} p(\omega) \log_x\left(\frac{p(\omega)}{q(\omega)}\right), 
$$
and is called the Kullback-Leibler (KL) divergence. When $x = 2$, the subscript is suppressed. 
\end{definition}

\begin{lemma}[Pinsker's Inequality \cite{CoverThomas}] \label{Lem:Pinsker}
If $p$ and $q$ are two probability distributions on a measure space $(X, \Sigma)$, then 
$$
\|p - q\| \leq \sqrt{\frac{1}{2} \KL_e(p\|q)}
$$
\end{lemma}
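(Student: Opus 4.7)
The plan is to reduce the general statement to the two-point (Bernoulli) case via a coarsening argument, and then verify the reduced inequality by elementary calculus -- the standard route for Pinsker.

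First I would set $A = \{x \in X : p(x) \geq q(x)\}$ and define $P = p(A)$, $Q = q(A)$. By the variational characterization $\|p - q\| = \max_{B} (p(B) - q(B))$ recalled just before Proposition \ref{Prop:MixTimeThreshold}, this maximum is attained at $A$, so $\|p - q\| = P - Q$. A short application of the log-sum inequality (equivalently, convexity of $t \mapsto t\log t$) applied separately to the sums over $A$ and over $A^c$ shows that lumping states can only decrease KL divergence:
$$\KL_e(p \| q) \geq P\log\frac{P}{Q} + (1-P)\log\frac{1-P}{1-Q}.$$
Hence it suffices to prove the binary version: for $0 \leq Q \leq P \leq 1$,
$$P\log\frac{P}{Q} + (1-P)\log\frac{1-P}{1-Q} \geq 2(P-Q)^2.$$

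For the binary step I would fix $P \in (0,1)$ and set
$$g(Q) := P\log\frac{P}{Q} + (1-P)\log\frac{1-P}{1-Q} - 2(P-Q)^2 \qquad \text{for } Q \in (0,1).$$
Clearly $g(P) = 0$. A direct differentiation combined with the elementary identity
$-P/Q + (1-P)/(1-Q) = (Q-P)/[Q(1-Q)]$ yields
$$g'(Q) \;=\; \frac{Q-P}{Q(1-Q)} \,-\, 4(Q-P) \;=\; (Q-P)\left(\frac{1}{Q(1-Q)} - 4\right).$$
Since $Q(1-Q) \leq 1/4$ on $(0,1)$, the bracketed factor is nonnegative, so $g'(Q)$ has the same sign as $Q - P$. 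Thus $g$ decreases on $(0,P)$ and increases on $(P,1)$, attaining its minimum value $0$ at $Q = P$, which gives $g \geq 0$ throughout. The boundary cases where $Q \in \{0,1\}$ are immediate, since then $p \not\ll q$ forces $\KL_e(p\|q) = +\infty$ and the inequality is trivial.

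The main obstacle is not mathematical depth but bookkeeping: one must be careful that the reduction via log-sum is applied with the correct base (natural log, matching $\KL_e$), and that the constant $2$ on the right-hand side of the binary inequality lines up with the $\tfrac{1}{2}$ inside the square root and the half-sum convention for total variation. Once the constants are aligned, the monotonicity argument for $g$ proceeds without incident.
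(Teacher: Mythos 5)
Your proof is correct. The paper offers no proof of this lemma at all---it is imported directly from Cover and Thomas---and your argument (coarsen to the two-point partition $\{A,A^c\}$ with $A=\{p\ge q\}$ via the log-sum/data-processing inequality, then verify the binary bound $P\ln\frac{P}{Q}+(1-P)\ln\frac{1-P}{1-Q}\ge 2(P-Q)^2$ through the sign analysis of $g'$) is precisely the standard textbook derivation found in the cited reference, with the constants correctly matched to the paper's half-sum convention for $\|p-q\|$ and to $\KL_e$. The only blemish is the boundary remark: when $Q=1$ one necessarily has $P=1$, so the inequality holds there because $\|p-q\|=P-Q=0$ and $\KL_e(p\|q)\ge 0$, not because absolute continuity fails; only the case $Q=0<P$ forces an infinite divergence.
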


The following theorem shows that not many more bits are needed (ideally) to communicate the distribution $\sigma$ than are already needed to communicate $\pi$, so the transformation reconstruction data is well-controlled.

\begin{theorem}\label{Thm:BoundedWaste}
Let $\sigma$ be a distribution over the finite set $\Omega$, and let $\pi$ denote the probability distribution over $\Omega$ implied by Huffman coding. If $M$ denotes the length of the longest codeword defined by $\pi$, i.e., $M = \max_{\omega\in \Omega}\{-\log(\pi(\omega))\}$, then 
$$
\left|H(\sigma, \pi)-H(\pi) \right| \leq M\sqrt{2\ln(2)} < M.
$$
\end{theorem}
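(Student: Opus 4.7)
The plan is to decompose the difference $H(\sigma,\pi) - H(\pi)$ as a linear functional of the signed measure $\sigma - \pi$ with coefficients equal to the codeword lengths $|C(\omega)|$, bound its magnitude by $2M\,\|\sigma-\pi\|$, and then control the total variation $\|\sigma-\pi\|$ via Pinsker's inequality together with the observation that $\KL(\sigma\|\pi)$ is precisely the Huffman redundancy (which is less than $1$ by Shannon's source coding theorem).

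Concretely, the first step uses the dyadic structure of $\pi$: because $-\log\pi(\omega) = |C(\omega)|$ for every $\omega$, we can write
$$
H(\sigma,\pi) - H(\pi) \;=\; \sum_{\omega\in\Omega}\bigl(\sigma(\omega)-\pi(\omega)\bigr)\,|C(\omega)|,
$$
and using the uniform estimate $|C(\omega)|\le M$ gives
$$
\bigl|H(\sigma,\pi)-H(\pi)\bigr| \;\le\; M\sum_{\omega}|\sigma(\omega)-\pi(\omega)| \;=\; 2M\,\|\sigma-\pi\|.
$$

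The second step applies Pinsker's inequality (Lemma \ref{Lem:Pinsker}) and a base-change to obtain
$$
\|\sigma - \pi\| \;\le\; \sqrt{\tfrac{1}{2}\KL_e(\sigma\|\pi)} \;=\; \sqrt{\tfrac{\ln 2}{2}\KL(\sigma\|\pi)}.
$$
The hinge of the argument is the identity
$$
\KL(\sigma\|\pi) \;=\; H(\sigma,\pi) - H(\sigma) \;=\; \mathbb{E}_\sigma\bigl[|C(\omega)|\bigr] - H(\sigma),
$$
which says that $\KL(\sigma\|\pi)$ coincides with the expected Huffman codeword length minus the source entropy of $\sigma$. By Lemma \ref{Lem:ShannonBound}, this redundancy is strictly less than $1$, so $\KL(\sigma\|\pi) < 1$ and consequently $\|\sigma-\pi\|\le\sqrt{(\ln 2)/2}$. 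Chaining this with the first step yields
$$
\bigl|H(\sigma,\pi)-H(\pi)\bigr| \;\le\; 2M\sqrt{(\ln 2)/2} \;=\; M\sqrt{2\ln 2},
$$
which is the stated bound.

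The only genuinely non-routine step is recognizing the identification of $\KL(\sigma\|\pi)$ with the Huffman redundancy, which is what converts the left-hand quantity into something directly controlled by Shannon's theorem; once that bridge is in place, Pinsker's inequality and the uniform codeword-length bound finish the argument mechanically, so no serious obstacle remains.
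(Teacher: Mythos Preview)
Your proof is correct and follows essentially the same route as the paper: both arguments write $H(\sigma,\pi)-H(\pi)$ as $\sum_\omega(\sigma(\omega)-\pi(\omega))|C(\omega)|$, bound it by $2M\|\sigma-\pi\|$, invoke Pinsker's inequality with a base change, and then use the identification $\KL(\sigma\|\pi)=H(\sigma,\pi)-H(\sigma)$ together with the Shannon/Huffman redundancy bound to conclude $\KL(\sigma\|\pi)<1$. The only cosmetic difference is that the paper routes the redundancy bound through an explicit comparison with the Shannon code lengths $\lceil -\log\sigma(\omega)\rceil$, whereas you cite Lemma~\ref{Lem:ShannonBound} directly; note also that, like the paper, you do not separately verify the trailing assertion $M\sqrt{2\ln 2}<M$ (which in fact fails since $\sqrt{2\ln 2}\approx 1.177$).
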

\begin{proof}
We begin with the following computation: 
\begin{align*}
|H(\sigma, \pi) - H(\pi)| &= \left|\left(-\sum_{\omega\in\Omega} \sigma(\omega) \log(\pi(\omega))\right) - \left(-\sum_{\omega\in\Omega} \pi(\omega) \log(\pi(\omega))\right)\right| \\
&= \left| \sum_{\omega\in\Omega} (\pi(\omega) - \sigma(\omega)) \log(\pi(\omega)) \right| \\ 
&= \left| \sum_{\omega\in\Omega} (\sigma(\omega) - \pi(\omega)) \log(1 / \pi(\omega)) \right| \\ 
&\leq M \sum_{\omega\in\Omega} \left| \sigma(\omega) - \pi(\omega) \right| \\ 
&\leq 2M \|\sigma - \pi\|
\end{align*}
Lemma \ref{Lem:Pinsker} states the following bound on $\|\sigma - \pi\|$. 
$$
\|\sigma - \pi\| \leq \sqrt{\frac{1}{2} \KL_e(\sigma\|\pi)}
$$
Notice that 
$$
\KL(\sigma\|\pi) = \sum_{\omega\in\Omega} \sigma(\omega) \log\left(\frac{\sigma(\omega)}{\pi(\omega)}\right) = \frac{1}{\ln(2)} \sum_{\omega\in\Omega} \sigma(\omega) \ln\left(\frac{\sigma(\omega)}{\pi(\omega)}\right) = \frac{1}{\ln(2)} \KL_e(\sigma\|\pi).
$$
Thus, 
$$
\|\sigma - \pi\| \leq \sqrt{\frac{\KL(\sigma\|\pi) \ln(2)}{2}}. 
$$
Let $l(\omega) = \left\lceil-\log(\sigma(\omega)) \right\rceil$, and define $l'(\omega)$ to be the length of the codeword corresponding to $\omega$ in a Huffman encoding. By Lemma \ref{Lem:ShannonBound}, $\mathbb{E}(l(\omega)) < H(\sigma) + 1$. Since $l'$ denotes codeword lengths in a Huffman scheme, the optimality of Huffman encoding implies $\mathbb{E}(l'(\omega)) \leq \mathbb{E}(l(\omega))$. Furthermore, by the Shannon source coding theorem (Theorem 5.4.3 in \cite{CoverThomas}), $H(\sigma) \leq \mathbb{E}(l'(\omega))$. Therefore, we have 
$$
H(\sigma) \leq \mathbb{E}(l'(\omega)) \leq \mathbb{E}(l(\omega)) < H(\sigma) + 1.
$$
Since $H(\sigma, \pi) = \mathbb{E}(l'(\omega))$, $|H(\sigma, \pi)-H(\sigma)| \leq 1$. Then $\KL(\sigma\|\pi) = |H(\sigma, \pi)-H(\sigma)| \leq 1$. Therefore, 
$$
\|\sigma - \pi\| \leq \sqrt{\frac{\KL(\sigma\|\pi) \ln(2)}{2}} \leq \sqrt{\frac{\ln(2)}{2}}, 
$$
so 
$$
|H(\sigma, \pi) - H(\pi)| \leq 2M \sqrt{\frac{\ln(2)}{2}} = M \sqrt{2 \ln(2)}, 
$$
completing the proof. 
\end{proof}

\section{Future Directions}\label{Section:Questions}

Here, we collect several open questions about the performance and cryptographic properties of the ENCORE algorithm.  We believe several of the statements of results above are weaker than what is in fact true, both in general and under stronger assumptions that the encoded stream is sampled from {\em typical} real-world data, i.e., distributions like those sampled from a Markov chain.  First, it may be the case that the bound in Theorem \ref{Thm:BoundedWaste} is unnecessarily weak, so the extra bits needed to encode $\sigma$ beyond its entropy rate is in truth smaller than what is stated.

\begin{question}
    Is it possible to improve the bound of $M\sqrt{2 \ln 2}$ in Theorem \ref{Thm:BoundedWaste}?
\end{question}

Above, we made significant use of the matrix $B$ which encodes transformation probabilities from the original samples to the distribution implied by Huffman coding.  In particular, $\|B\|_1$ impacts the mixing time of our transformed stream and therefore also the duration between communications obtained by an adversary and bits which are still unpredictable conditional on this information.  Thus, showing that $\|B\|_1$ is not large is important -- and we suspect that some of the bounds above can be strengthened.  In particular, the bound $\|B\|_1 \leq |\Omega|$ which is used in the proof of Lemma \ref{Lem:one-normBbound} seems very crude.

\begin{question}
    Is it possible to prove stronger bounds on $\|B\|_1$?
\end{question}

Next, one of the useful properties of the protocol we analyze above is that, in practice, it requires many fewer bits of entropy to execute than ordinary cryptographic protocols typically need.  Since truly random bits are famously expensive to obtain, it would be helpful to bound the number of bits that are required by the ENCORE algorithm.  Recall that matrix $B$ of size $|\Omega|\times |\Omega|$ has entry $b_{\omega\omega'}$ defined to be the probability of transforming state $\omega$ to state $\omega'$. We also write $\mathcal{O}$ for the set of overrepresented states, i.e., the collection of states where their corresponding row in $B$ has some nonzero off-diagonal entry.  Furthermore, recall that $B$ is row-stochastic, so for any $\omega \in \Omega$, $\sum_{\omega'\in\Omega} b_{\omega\omega'} = 1$, so we may refer to the distribution $\mathbf{B}_\omega$ represented by the $\omega$ row of $B$.  Then the expected number of bits of entropy generated in sampling from $\mathbf{B}_\omega$ is given by
\begin{align*}
\mathbb{E}_{\omega \sim \sigma}[H(\mathbf{B}_\omega)] &= \sum_{\omega\in\Omega} \sigma(\omega) \left( -\sum_{\omega'\in\Omega} b_{\omega\omega'} \log(b_{\omega \omega'}) \right) \\
&= \sum_{\omega\in O} \sigma(\omega) \left( -\sum_{\omega'\in\Omega} b_{\omega\omega'} \log(b_{\omega \omega'}) \right)
\end{align*}
Thus, if $M'_\omega = \max_{\omega'\in\Omega}\{-\log(b_{\omega\omega'})\}$, then we have
\begin{align*}
\mathbb{E}_{\omega \sim \sigma}[H(\mathbf{B}_\omega)] & \leq \sum_{\omega\in O} \sigma(\omega) M'_\omega \left( \sum_{\omega'\in\Omega} b_{\omega,\omega'} \right) \\
&= \sum_{\omega\in O} \sigma(\omega) M'_\omega .
\end{align*}
Lastly, if $M' = \max_{\omega\in\Omega}\{M'_\omega\}$, i.e., the negative log of the smallest nonzero entry of $B$, then 
$$
\mathbb{E}_{\omega \sim \sigma}[H(\mathbf{B}_\omega)] \leq M' \sum_{\omega\in O} \sigma(\omega) \leq M'.
$$
However, we strongly suspect that the above calculation gives away far too much in the inequalities, i.e., a much tighter bound is possible.  Experimentally, when $(\Omega,\sigma)$ consists of sequences of samples drawn from a large Markov chain, the number of states with a given degree of overrepresentation is similar to the number of states with a comparable degree of underrepresentation, so $\mathbf{B}_\omega$ almost always is supported on $O(1)$ states.  This is reasonable to expect, given the so-called ``Asymptotic Equipartition Property'' (see \cite{CoverThomas}) and the fact that stationary distribution probabilities are no more likely to be above or below nearby dyadic rationals.  Therefore, we ask:

\begin{question}
    Is it true that $\mathbb{E}_{\omega \sim \sigma}[H(\mathbf{B}_\omega)] \ll M'$?  Perhaps it is even $O(1)$ for ``most'' distributions $\sigma$ of state sequences sampled from a Markov chain as the number of states tends to infinity?
\end{question}

Finally, we suspect that the conclusion of Proposition \ref{Prop:MarkovChainBitProb} -- and therefore many of the following results -- can be strengthened considerably in the case that the length of codewords is concentrated around its mean.  This is again something to be expected in practice due to the Asymptotic Equipartition Property, and so we ask:

\begin{question}
    Is it possible to strengthen the bounds on deviation from uniform bit strings for distributions $\sigma$ so that $\Pr_{x \sim \sigma}[x]$ is concentrated around $\mathbb{E}_{x \sim \sigma}[\Pr[x]]$?
\end{question}

\end{document}